%% file: main.tex
\documentclass[sigconf, nonacm, pdfa]{acmart}

\newcommand\vldbdoi{10.14778/3819518.3819545}
\newcommand\vldbpages{2210-2223}
\newcommand\vldbvolume{19}
\newcommand\vldbissue{9}
\newcommand\vldbyear{2026}
\newcommand\vldbauthors{\authors}
\newcommand\vldbtitle{\shorttitle} 
\newcommand\vldbavailabilityurl{https://github.com/Geonho-Lee/SafeQL}
\newcommand\vldbpagestyle{empty}

\usepackage{enumitem}
\usepackage{multirow}
\usepackage{makecell}
\usepackage{mathtools}
\usepackage{setspace}
\usepackage[linesnumbered,ruled,noend]{algorithm2e}
\usepackage{xcolor}
\usepackage{framed}
\usepackage{multicol}
\usepackage{balance}
\usepackage[utf8]{inputenc}
\usepackage{amsthm}
\usepackage{amsmath}
\usepackage[a-2b]{pdfx}

\definecolor{notice}{RGB}{0, 0, 0}

\definecolor{myred}{RGB}{255, 0, 0}
\definecolor{mysafety}{RGB}{255, 221, 221}
\definecolor{myorange}{RGB}{244, 177, 131}
\definecolor{myblue}{RGB}{157, 195, 230}
\definecolor{mygreen}{RGB}{169, 209, 142}

\usepackage[normalem]{ulem}
\usepackage{makecell}

\newcolumntype{P}[1]{>{\arraybackslash}p{#1}}
\newcolumntype{M}[1]{>{\centering\arraybackslash}m{#1}}

\SetCommentSty{mycommfont}

\SetFuncSty{myfuncfont}

\newcommand{\tightarrow}[1]{%
  \raisebox{-0.2ex}{$\;\mathrel{\xrightarrow{\text{\raisebox{-0.8ex}{\scriptsize $#1$}}}}\;$}%
}

\DeclareMathOperator*{\argminA}{arg\,min}

\begin{document}

\title{\textcolor{notice}{SafeQL: Search-based Refinement for Safe and Efficient LLM-based Text-to-SQL}}

\author{Geonho Lee}
\email{ghlee5084@kaist.ac.kr}
\affiliation{%
  \institution{Korea Advanced Institute of Science and Technology}
  \country{Republic of Korea}
}

\author{Min-Soo Kim}
\authornote{Corresponding author.}
\email{minsoo.k@kaist.ac.kr}
\affiliation{%
  \institution{Korea Advanced Institute of Science and Technology}
  \country{Republic of Korea}
}

\begin{abstract}
\textcolor{notice}
{
Large language models (LLMs) have advanced Text-to-SQL by enabling natural language interfaces to databases without task-specific fine-tuning.
However, existing LLM-based systems remain unreliable, often generating SQL queries that are invalid under the database schema, referencing non-existent tables, attributes, functions, or values.
Such errors persist because interactions with the database management system (DBMS) are typically limited to error messages, leaving it in a largely passive role during query refinement.
This paper proposes SafeQL, \textit{a search-based refinement paradigm that redefines the role of the DBMS as an active guide in the refinement process}.
Instead of regenerating entire queries after execution failure, SafeQL interprets DBMS feedback to incrementally repair only the erroneous components.
Each refinement step is formulated as a guided search within a \textit{safe query space}, where candidate queries are progressively validated through DBMS execution, thereby converging to an executable query and preventing repeated regeneration of errors.
Experiments on the Bird and Spider benchmarks show that SafeQL significantly improves execution accuracy and efficiency compared to regeneration-based methods.
}
\end{abstract}

\maketitle

\pagestyle{\vldbpagestyle}
\begingroup\small\noindent\raggedright\textbf{PVLDB Reference Format:}\\
\vldbauthors. \vldbtitle. PVLDB, \vldbvolume(\vldbissue): \vldbpages, \vldbyear.\\
\href{https://doi.org/\vldbdoi}{doi:\vldbdoi}
\endgroup
\begingroup
\renewcommand\thefootnote{}\footnote{\noindent
This work is licensed under the Creative Commons BY-NC-ND 4.0 International License. Visit \url{https://creativecommons.org/licenses/by-nc-nd/4.0/} to view a copy of this license. For any use beyond those covered by this license, obtain permission by emailing \href{mailto:info@vldb.org}{info@vldb.org}. Copyright is held by the owner/author(s). Publication rights licensed to the VLDB Endowment. \\
\raggedright Proceedings of the VLDB Endowment, Vol. \vldbvolume, No. \vldbissue\ %
ISSN 2150-8097. \\
\href{https://doi.org/\vldbdoi}{doi:\vldbdoi} \\
}\addtocounter{footnote}{-1}\endgroup

\ifdefempty{\vldbavailabilityurl}{}{
\vspace{.3cm}
\begingroup\small\noindent\raggedright\textbf{PVLDB Artifact Availability:}\\
The source code, data, and/or other artifacts have been made available at \url{\vldbavailabilityurl}.
\endgroup
}

\input{sec_1}
\input{sec_2}
\input{sec_3}
\input{sec_4}
\input{sec_5}
\input{sec_6}
\input{sec_7}
\input{sec_8}
\input{sec_9}

\vspace{-3mm}
\begin{acks}
\vspace{-1mm}
This work was supported by the National Research Foundation of Korea(NRF) grant funded by the Korea government(MSIT)(No. RS-2024-00347471) and Institute of Information \& communications Technology Planning \& Evaluation (IITP) grant funded by the Korea government(MSIT) (No. 2019-0-01267, GPU-based Ultrafast Multi-type Graph Database Engine SW).
\end{acks}

\bibliographystyle{ACM-Reference-Format}
\bibliography{main}

\end{document}

%% file: sec_1.tex
\section{Introduction}
\label{sec:introduction}

Natural language interfaces to databases (NLIDBs) have long been a key research goal at the intersection of database systems and natural language processing~\cite{zelle1996learning, iacob2020neural}.  
Among them, Text-to-SQL—the task of translating a natural language question into an executable SQL query—represents the most concrete realization of this vision, enabling users to access structured data through natural language rather than query syntax.  
By eliminating the need for SQL expertise, Text-to-SQL systems make data access more inclusive and have broad applications in analytics, business intelligence, and scientific data exploration~\cite{kim2020natural, katsogiannis2023survey, liu2025survey}.

Research on Text-to-SQL has evolved through several generations.  
Early symbolic systems such as PRECISE~\cite{popescu_towards_nodate}, NaLIR~\cite{li_nalir_2014}, SQLizer~\cite{yaghmazadeh_sqlizer_2017}, and ATHENA~\cite{saha_athena_2016} relied on rule-based parsing and semantic grammars that explicitly mapped natural language to SQL templates. 
Although interpretable, these systems were fragile and required manual engineering for each domain.  
Subsequent neural semantic parsers replaced handcrafted rules with data-driven models trained on paired natural-language and SQL examples.  
Representative models include Seq2SQL~\cite{zhong_seq2sql_2017}, SQLNet~\cite{xu_sqlnet_2017}, IRNet~\cite{guo_towards_2019}, RAT-SQL~\cite{wang_rat-sql_2020}, and SmBoP~\cite{rubin_smbop_2021}, which introduced neural encoders and schema linking for cross-domain generalization.  
Later work incorporated syntax constraints and intermediate representations to improve validity and cross-schema transfer~\cite{lin_bridging_2020, scholak_picard_2021}.  
Despite substantial progress, these models still require expensive annotation and retraining whenever the database schemas change, limiting their scalability in practical deployments.

The emergence of large language models (LLMs) such as GPT~\cite{brown_language_2020} has profoundly transformed the Text-to-SQL landscape.  
Prompt-based approaches like DAIL-SQL~\cite{gao_text--sql_2024} and MCS-SQL~\cite{lee_mcs-sql_2025} demonstrated that general-purpose LLMs can perform Text-to-SQL generation through schema-aware prompting, without task-specific fine-tuning.
Building on this paradigm, subsequent prompting techniques—including few-shot sampling~\cite{brown_language_2020, li2024pet, trummer2022codexdb, nan_enhancing_2023, gao_text--sql_2024, lee_mcs-sql_2025}, chain-of-thought reasoning~\cite{zhang2023act, wei_chain--thought_2022}, and least-to-most decomposition~\cite{pourreza2024dts, pourreza_din-sql_2023, zhou_least--most_2022}—further improved logical coherence and interpretability.  
More recently, multi-agent systems including MAC-SQL~\cite{wang2025mac}, CHESS-SQL~\cite{talaei_chess_2024}, CHASE-SQL~\cite{pourreza_chase-sql_2024}, Alpha-SQL~\cite{li_alpha-sql_2025}, and OpenSearch-SQL~\cite{xie_opensearch-sql_2025} have introduced collaborative reasoning, in which multiple LLM agents decompose, verify, and align candidate SQL queries to ensure consistent output. 
These systems mark a shift from one-shot generation to autonomous reasoning pipelines that incorporate coordination among multiple agents.

\textcolor{notice}{
Despite these advances, reliability remains a critical challenge due to the inherent hallucination tendencies of LLMs.  
In practice, many failures arise when generated queries cannot be executed, for example, due to references to nonexistent relations, attributes, or functions. 
Even state-of-the-art LLMs often produce queries that appear plausible but fail at the DBMS level. 
Such failures expose a mismatch between the user’s intent and the database schema and disrupt downstream workflows, making it essential to refine queries into \textit{safe (i.e., executable)} ones.}

Recent studies~\cite{pourreza_din-sql_2023, wang2025mac, talaei_chess_2024, ren_power_2025, xie_opensearch-sql_2025} have attempted to mitigate this issue through \textit{regeneration-based refinement}: executing the generated SQL, capturing DBMS error messages, and prompting the LLM to regenerate a corrected query.
However, this paradigm treats the DBMS primarily as a \textit{passive checker} that provides error messages, and repeatedly regenerates entire queries based on the errors.
As a result, the regeneration loop is both \textit{inefficient}—requiring excessive tokens and computation—and \textit{unreliable}, often reintroducing previous errors with no guarantee of convergence.
Figure \ref{fig:high_level_idea}(a) illustrates a regeneration-based refinement process.
The model begins with an initial query $q_0$, which fails during execution.
Upon failure, the entire query is discarded, and a new query $q_1$ is regenerated based on the previous error message.
This cycle repeats (e.g., $q_2$, $q_3$ …) until an executable query is obtained.
Each iteration re-parses and re-executes the entire query, resulting in redundant computation and limited reuse of valid fragments.

In contrast, we propose SafeQL, a refinement framework that achieves safe and efficient correction 
through a \textit{search-based refinement} paradigm that fundamentally rethinks how Text-to-SQL errors are corrected. 
Rather than discarding the initial query $q_0$, SafeQL interprets and incrementally refines it.
As shown in Figure~\ref{fig:high_level_idea}(b), when $q_0$ fails, SafeQL does not regenerate a new $q_1$.
Instead, it analyzes the DBMS feedback to precisely locate the faulty component—whether a relation, attribute, or function—and applies a minimal, structure-preserving correction as depicted in blue arrows in the Figure~\ref{fig:high_level_idea}(b).
This transforms $q_0$ directly into refined and executable queries, while preserving all valid fragments of the original logic.
Through this process, SafeQL replaces the traditional trial-and-error regeneration loop with a guided search over a \textit{safe query space} as depicted in the green part of the Figure~\ref{fig:high_level_idea}(b), where candidates are incrementally repaired to satisfy schema-level constraints enforced by the DBMS.
Since this safe query space may contain multiple executable refinements for a single input $q_0$—for example, when an unknown attribute can be repaired using several attributes in the schema-—SafeQL selects the candidate that is semantically the most aligned with the initial query $q_0$.
This paradigm shift redefines the DBMS from a passive checker into an \textit{active reasoning engine} that incrementally narrows the search to the most faithful executable query, improving both safety and efficiency.

\vspace{-1mm}
\begin{figure}[hbtp]
    \centerline{\includegraphics[width=3.4in]{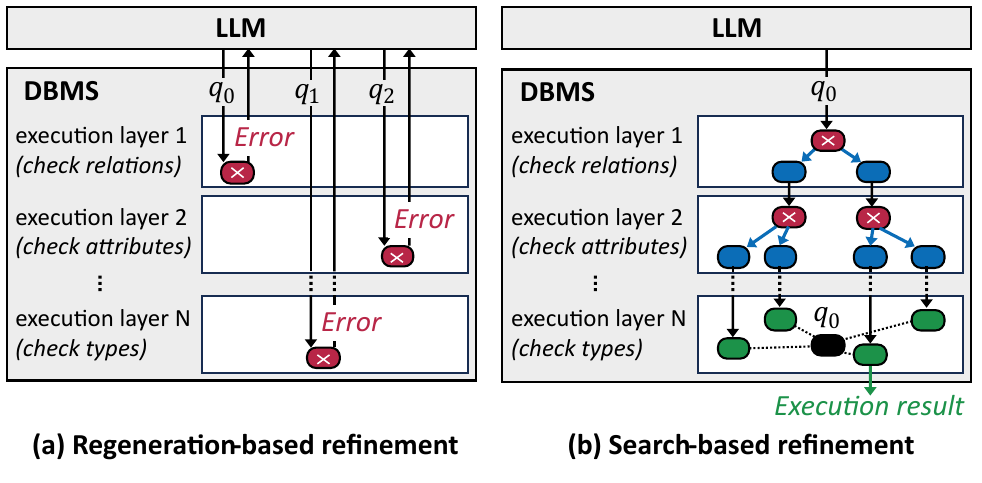}}
    \vspace{-5mm}
    \caption{High-level comparison of refinement paradigms.}
    \label{fig:high_level_idea}
\end{figure}
\vspace{-3mm}

Beyond the paradigm itself, the effectiveness of SafeQL critically depends on how search and optimization processes are conducted.  
The safe query space can grow exponentially, as each detected error may yield multiple candidate substitutions for relations, attributes, or values.
To manage this combinatorial complexity, SafeQL employs a best-first search strategy guided by a \textit{semantic distance metric} that combines structural similarity and embedding-based relevance.  
This prioritization ensures that the most promising candidates are explored first, improving convergence while preserving the user's original query intent.  
In terms of optimizations, SafeQL incorporates \textit{type-based pruning}, which eliminates type-inconsistent refinements, and \textit{top-$K$ pruning}, which restricts candidate substitutions to the most semantically relevant ones.
Together, these optimizations make search-based refinement both practical and scalable.

Equally important is how SafeQL is realized as a practical, system-level framework rather than merely an algorithmic concept.
SafeQL is integrated directly into the DBMS, enabling refinement through the database’s own parser, binder, and type analyzer instead of treating SQL as plain text.
This tight integration allows for precise error localization, while system-level optimizations—such as caching and vector indexing—make semantic similarity computations efficient at scale.
Moreover, SafeQL incorporates a \textit{hybrid refinement strategy} that combines its search-based refinement with selective regeneration to handle cases where the initial query is fundamentally misgenerated.
Together, these system-level integrations make our SafeQL a complete, end-to-end framework that delivers reliable and efficient Text-to-SQL refinement in real-world environments.

\noindent
Our main contributions are summarized as follows: \vspace{-1mm}
\begin{itemize}[labelindent=0.0em,labelsep=0.5em,leftmargin=*]

    \item We propose a search-based refinement paradigm that reformulates Text-to-SQL refinement as a guided exploration of a safe query space, where execution feedback serves as actionable semantic guidance rather than passive error reporting.

    \item We propose a best-first search guided by a semantic distance metric, augmented with type-based and top-K pruning, enabling scalable and accurate refinement.

    \item We implement SafeQL as a DBMS-integrated refinement framework, incorporating caching, vector indexing, and hybrid refinement to achieve efficient and robust execution.

    \item We conduct comprehensive experiments on Bird and Spider benchmarks, demonstrating up to 5.8\% improvements in execution accuracy and 15$\times$ reductions in token usage compared to regeneration-based baselines.

\end{itemize}

The remainder of this paper is organized as follows.
Section~\ref{sec:background} provides background and definitions.
Section~\ref{sec:search_space} introduces the safe query space and the refinement tree formulation.
Section~\ref{sec:search_algorithm} details the search strategy, and Section~\ref{sec:pruning_methods} presents pruning optimizations for scalable refinement.
Section~\ref{sec:system} describes the system architecture, and Section~\ref{sec:evaluation} reports the experimental results.
Finally, Section~\ref{sec:related_work} discusses related work, and Section~\ref{sec:conclusion} concludes the paper.

\vspace{-2mm}

%% file: sec_2.tex
\section{Preliminaries}
\label{sec:background}

\subsection{Text-to-SQL Stages}
\label{sec:background:text-to-sql}

The Text-to-SQL task translates a natural language question into an executable SQL query over a given database.
Typically, this process can be decomposed into three stages:
(1) extracting database information, 
(2) generating candidate SQL queries, and 
(3) refining them using DBMS execution.
We formalize these stages as follows.

\begin{definition}[Text-to-SQL Process]
\label{def:text2sql}
Let $D$ denote a database, and let $\mathcal{N}$, $\mathcal{P}$, and $\mathcal{Q}$ denote the
natural language, prompt, and query spaces, respectively.
The Text-to-SQL process is defined as
\[
\mathcal{N}
\xrightarrow{\,f_{\text{extract}}\,}
\mathcal{P}
\xrightarrow{\,f_{\text{generate}}\,}
\mathcal{Q}
\xrightarrow{\,f_{\text{refine}}\,}
\mathcal{Q},
\text{\;\;\;\;where: }
\]
\begin{itemize}[labelindent=0em,labelsep=0.5em,leftmargin=*]
    \item 
    \textbf{Extraction.}
    \[
    f_{\text{extract}} : 
    \textcolor{notice}{\mathcal{N} \times D \to \mathcal{P}}
    \]
    interprets the natural language question in the context of $D$ and constructs a prompt enriched with database-level hints.

    \item 
    \textbf{Generation.}
    \[
    f_{\text{generate}} : 
    \mathcal{P} \to \mathcal{Q}
    \]
    synthesizes an SQL query using the LLM that reflects the intent expressed in the prompt.

    \item 
    \textbf{Refinement.}
    \[
    f_{\text{refine}} : 
    \textcolor{notice}{\mathcal{Q} \times D \to \mathcal{Q}}
    \]
    corrects the generated query using feedback obtained from DBMS execution over $D$. \hfill $\Box$
\end{itemize}
\end{definition}

Although LLMs can generate plausible SQL queries, they may still suffer from hallucinations.
Hence, the refinement stage using DBMS execution is crucial to ensure reliability, regardless of whether the generation is based on the prompt~\cite{gao_text--sql_2024, lee_mcs-sql_2025, li2024pet, pourreza_din-sql_2023, zhang2023act} or the agent~\cite{dong2023c3, talaei_chess_2024, sheng_csc-sql_2025, li_alpha-sql_2025, xie_opensearch-sql_2025}. 

\subsection{Refinement Stage}
\label{sec:background:refinement}

In the refinement stage, the execution of the DBMS plays a central role in guiding the correction of the query, as shown in Figure~\ref{fig:high_level_idea}.  
Execution errors provide essential feedback signals that enable the Text-to-SQL system
to identify and correct invalid parts of a query.  
By interpreting these errors, the system can transform an invalid query
into one that is both executable and semantically faithful to the user’s intent.
To formalize how DBMS execution supports this process,
we first define the notions of \textit{Database} and \textit{Execution Error}, describing how DBMS is utilized during refinement.
Here, we adopt a simplified database definition tailored for Text-to-SQL scenarios (e.g., assuming that all attributes in $Atts$ are distinct).

\begin{definition}[Database] 
\label{def:database-schema}
A database $D$ is a collection of relations, attributes, and values under schema and type constraints. 
Formally,
\[
D = (Rels, Atts, Vals, Typs, \Lambda, \Gamma) \text{, where:} 
\]
\begin{itemize}[labelindent=0.0em,labelsep=0.5em,leftmargin=*]
    \item $Rels$ and $Atts$ are the sets of relations and attributes, respectively.
    \item $Vals$ is a set of instance-level values that populate the relations.
    \item $Typs$ is a set of data types (e.g., int, varchar, text).
    \item $\Lambda: Rels \rightarrow 2^{Atts}$ is a \textit{relation signature} mapping each relation to its corresponding set of attributes.
    \item $\Gamma: Atts \cup Vals \rightarrow Typs$ is a \textit{type constraint} mapping each attribute or value to its data type.
    \hfill $\Box$
\end{itemize}
\end{definition}

Figure~\ref{fig:example-data} illustrates a simple schema with two relations, \textsf{Users} and \textsf{Posts}, 
where $\Lambda(\textsf{Users}) = \{\textsf{ID}, \textsf{name}, \textsf{age}\}$ and 
$\Gamma(\textsf{name}) = \textsf{varchar}$.

\begin{figure}[hbtp]
    \vspace{-3mm}
    \centerline{\includegraphics[width=3.3in]{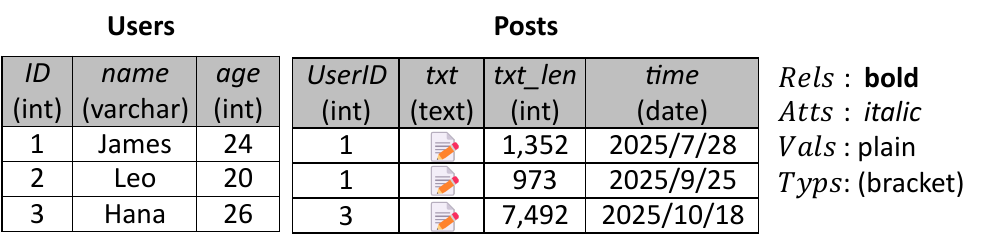}}
    \vspace{-4mm}
    \caption{Example of a database.}
    \vspace{-3mm}
    \label{fig:example-data}
\end{figure}

During query refinement, some candidate queries fail during execution due to schema or type violations.
We refer to these cases as \textit{execution errors}, as formally defined as follows.

\begin{definition}[Execution Error]
\label{def:execution-error}
Consider a DBMS that manages a database $D$. We denote the execution of $q \in \mathcal{Q}$ by the DBMS as $D \;\vdash\; q \;\Rightarrow\; \llbracket q \rrbracket$, where DBMS executes $q$ on $D$ and returns the result $\llbracket q \rrbracket$.  
An \emph{execution error} occurs when the execution fails, in which case we write: 
\[
\mathsf{DBMS} : D \;\vdash\; q \;\Rightarrow\; \epsilon(q), 
\]
where $\epsilon(q)$ is the error reported by the DBMS.
The most common errors in the Text-to-SQL scenario include:

\begin{itemize}[labelindent=0em,labelsep=0.5em,leftmargin=*,itemsep=0em]
    \item \textbf{\textit{unknown relation}:} The query references a non-existent table.
    \item \textbf{\textit{unknown attribute}:} The query refers to an undefined column.
    \item \textbf{\textit{unknown function (or operator)}:} The query invokes a function (or operator) not supported by the DBMS.
    \item \textbf{\textit{empty result}:} The query executes successfully, but returns no tuples due to unsatisfiable predicates.
    \hfill $\Box$
\end{itemize}
\vspace{-1mm}
\end{definition}

Figure~\ref{fig:example-errors} presents typical examples of execution errors along with their correct (gold) SQL counterparts.  
The top-left query fails because it references a non-existent attribute (\textsf{username}), whereas the top-right query accesses attributes (\textsf{txt}, \textsf{time}) from another relation without performing a join.  
In the bottom-left example, the function \textsf{strftime} fails to execute, illustrating how dialect-specific differences in SQL function support can cause errors.
Finally, the bottom-right query returns an empty result due to a case-sensitive predicate mismatch (\textsf{`leo'} vs. \textsf{`Leo'}).  
\textcolor{notice}
{
Although empty results do not trigger runtime errors in standard DBMSs,
many Text-to-SQL benchmarks treat them as errors~\cite{shen2025study, ren_power_2025}.
We follow this convention, while noting that empty results may not always correspond to incorrect queries in practice.
Our system also provides configurable controls for refinement per error type, and reports detailed error statistics (Figure~\ref{fig:error-statistics}) to facilitate analysis.
}

\vspace{-4mm}
\begin{figure}[hbtp]
    \centerline{\includegraphics[width=2.9in]{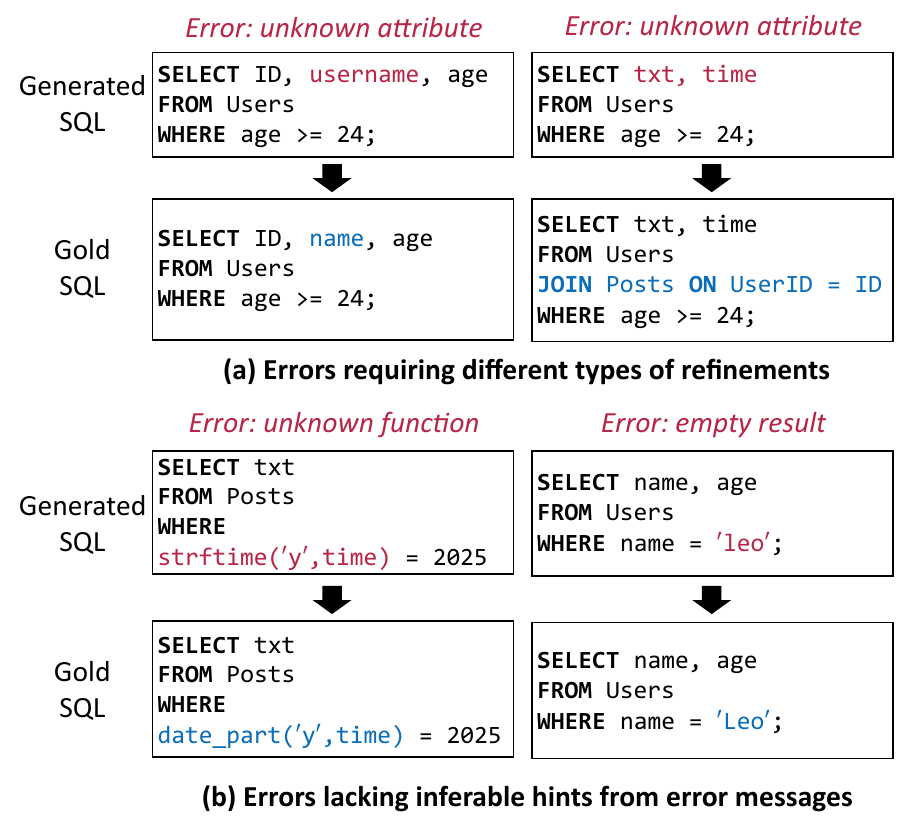}}
    \vspace{-5mm}
    \caption{Examples of erroneous SQLs and gold SQLs.}
    \vspace{-4mm}
    \label{fig:example-errors}
\end{figure}

Regeneration-based Text-to-SQL methods~\cite{pourreza_din-sql_2023, wang2025mac, ren_power_2025, talaei_chess_2024, xie_opensearch-sql_2025} attempt to correct these errors through a regeneration-based refinement process, formally defined as follows. 

\begin{definition}[Regeneration-based Refinement]
\label{def:regeneration-refinement}
Let $q_i \in Q$ denote the query at iteration $i (i \geq 0)$. 
The refinement process \[f_{\text{refine}}^{\text{regenerate}} : \mathcal{Q} \times D \to \mathcal{Q} ,\text{\;\;\;\;consists of: }\] 
\begin{itemize}[labelindent=0em,labelsep=0.5em,leftmargin=*]

    \item \textbf{Execution.}  
    The DBMS executes the current query \(q_i\).  
    If execution succeeds (i.e., \(D \vdash q_i \Rightarrow \llbracket q_i \rrbracket\)), it returns the result \(\llbracket q_i \rrbracket\). 
    Otherwise, if execution failed, it returns an execution error \(\epsilon(q_i)\).

    \item \textbf{Regeneration.}  
    The refiner updates the prompt $p_i$ into the new one $p_{i+1}$ with the error information and regenerates a new query:   
    \[p_{i+1} = (p_i, q_i, \epsilon(q_i)), \quad
    q_{i+1} = f_{\text{generate}}(p_{i+1})\] 
    This repeats until a valid query \(q_k\) satisfies  
    \(D \vdash q_k \Rightarrow \llbracket q_k \rrbracket.\)
\hfill $\Box$
\end{itemize}
\end{definition}

Figure~\ref{fig:regeneration-refinement} illustrates this common prompt structure adopted by most regeneration-based methods. 
However, as shown in Figure~\ref{fig:example-errors}(a), identical error messages \textit{Error: unknown attribute} may correspond to different underlying causes, each requiring a distinct refinement strategy. 
Moreover, as shown in Figure~\ref{fig:example-errors}(b), DBMS error messages often provide little or no actionable guidance, making it difficult for the model to determine the correct fix. 
To alleviate these limitations, several follow-up methods~\cite{ren_power_2025, talaei_chess_2024, xie_opensearch-sql_2025} enrich the error context.
RED-SQL~\cite{ren_power_2025} augments the error message with \textit{constraints violations} details to provide a more informative signal.
Meanwhile, CHESS-SQL~\cite{talaei_chess_2024} and
OpenSearch-SQL~\cite{xie_opensearch-sql_2025}
incorporate \textit{few-shot refinement examples} to guide the model toward more plausible refinements.

Despite these improvements, all regeneration-based methods fundamentally rely on the LLM to regenerate an error-free query.
Since the model offers no guaranty that a newly generated query will execute successfully,
the process often requires multiple iterations before convergence.
This iterative regeneration loop is both \textit{computationally expensive} and \textit{time-consuming}, significantly increasing LLM token usage and system latency.

\setlength{\FrameSep}{5pt}
\begin{center}
\vspace{-1mm}
\begin{minipage}{0.9\linewidth}
\begin{framed}
    {\sffamily\fontsize{8pt}{1pt}\selectfont
    \setlength{\baselineskip}{8pt}
    \noindent Refine the SQL based on the followings. \\[1.5pt]
    \noindent [Question] \\[-0.5pt]
    \noindent Show me the posts from people who are older than 24 \\[1.5pt]
    \noindent [Database Schema] \\[-0.5pt]
    \noindent \texttt{\textbf{CREATE TABLE} Users (ID int, name varchar, age int)} \\[-0.5pt]
    \noindent \texttt{\textbf{CREATE TABLE} Posts (UserID int, txt text, time date)} \\[1.5pt]
    \noindent [Generated SQL] \\[-0.5pt]
    \noindent \texttt{\textbf{SELECT} txt, time \textbf{FROM} Users \textbf{WHERE} age >= 24} \\[1.5pt]
    \noindent [Execution result]  \textit{Error: unknown attribute - txt} \\[1.5pt]
    \noindent [Few shot] \{\textit{refinement examples}\}
    }
\end{framed}
\vspace{-3mm}
\captionof{figure}{Example of a regeneration prompt.}
\label{fig:regeneration-refinement}
\end{minipage}
\end{center}
\vspace{-3mm}


%% file: sec_3.tex
\section{Safe Query Space}
\label{sec:search_space}

This section formalizes the search space in which SafeQL performs query refinement.
Unlike regenerating entire queries after each failure,
SafeQL systematically explores a structured space of refinements—syntactic transformations 
that incrementally repair an erroneous query based on its observed error.
SafeQL focuses on \textit{error-directed transformations}, i.e., targeted adjustments that resolve the specific cause of failure while retaining the original query’s structure and intent.
We first define by defining the atomic \textit{refinement step}, 
then construct the \textit{safe refinement tree} that organizes these steps into 
error-resolving paths, and finally formalize the \textit{safe query space} comprising executable, error-free queries.

\vspace{-3mm}
\subsection{Refinement Step}
\label{sec:search_paradigm:refinement_step}

We begin with a simplified SQL grammar that captures the core syntactic components.
Section~\ref{sec:system:grammar} later extends this grammar to support subquery, aliasing, and other complex SQL structures.

\setlength{\FrameSep}{1pt}
\begin{center}
\vspace{-1mm}
\begin{minipage}{0.8\linewidth}
\begin{framed}
    \[
    \begin{array}{@{}l@{\;\;}l@{}}
    \texttt{Query} & ::= \quad \texttt{SELECT S FROM F WHERE W} \\
    \texttt{F}      & ::= \quad \texttt{R | F JOIN R ON W} \\
    \texttt{W}      & ::= \quad \texttt{$a_1 \oplus^{\dagger} a_2$ | $a \oplus v$} \\
    \texttt{S}      & ::= \quad \texttt{$v$ | $a$ | f($a_1, a_2, ...$)}
    \end{array}
    \]
\end{framed}
\vspace{-3mm}
\captionof{figure}{Simplified SQL syntax.}
\vspace{-1mm}
\small{$^{\dagger}$~The symbol $\oplus$ denotes a comparison operator (e.g., =, <, >).}
\label{fig:syntax}
\vspace{-3mm}
\end{minipage}
\end{center}

Based on this grammar, we define the refinement stage a sequence of atomic refinement steps, where each step is a transformation between two SQL queries that modifies exactly one syntactic component while preserving the structural validity of the query.
Each refinement step corresponds to one of several canonical operations, formally described in Definition~\ref{def:refinement_step}.

\begin{definition}[Refinement Step, $\tightarrow{r}$]
A refinement step $r$ is the application of one of the following operations to a given SQL query.

\begin{itemize}[labelindent=0em,labelsep=0.3em,leftmargin=*,itemsep=0.2em]

    \item \textbf{Relation Refinement:}
    replace a relation $R$ with another $R'$:
    {\setlength{\abovedisplayskip}{1pt}
     \setlength{\belowdisplayskip}{1pt}
    \[
    \texttt{FROM}\; R \;\tightarrow{r}\; \texttt{FROM}\; R'
    \]

    \item \textbf{Join Refinement:}
    add a join with another relation $R'$ and a suitable condition $W'$:
    \[
    \texttt{FROM}\; R \;\tightarrow{r}\; \texttt{FROM}\; R \;\texttt{JOIN}\; R' \;\texttt{ON}\; W'
    \]

    \item \textbf{Attribute Refinement:}
    replace an attribute $a$ with another $a'$, in either \texttt{SELECT} or \texttt{WHERE}.  
    Typical cases include:
    \[
    \begin{array}{c}
        \texttt{SELECT}\; a \;\tightarrow{r}\; \texttt{SELECT}\; a' \\[4pt]
        \texttt{SELECT}\; f(a_1, a_2, \ldots) \;\tightarrow{r}\; \texttt{SELECT}\; f(a_1', a_2, \ldots)
        \\[4pt]
        \texttt{WHERE}\; a_1 \,\oplus\, a_2 \;\tightarrow{r}\; \texttt{WHERE}\; a_1' \,\oplus\, a_2 
        \quad\text{or}\quad
        \texttt{WHERE}\; a_1 \,\oplus\, a_2' \\[4pt]
        \texttt{WHERE}\; a \,\oplus\, v \;\tightarrow{r}\; \texttt{WHERE}\; a' \,\oplus\, v 
    \end{array}
    \]

    \item \textbf{Value Refinement:}
    replace a constant value $v$ with another $v'$:
    \[
    \texttt{WHERE}\; a \,\oplus\, v \;\tightarrow{r}\; \texttt{WHERE}\; a \,\oplus\, v'
    \]

    \item \textbf{Function Refinement:}
    replace a function symbol $f$ with another $f'$ while preserving its argument structure:
    \[
    \texttt{SELECT}\; f(a_1, a_2, \ldots) \;\tightarrow{r}\; \texttt{SELECT}\; f'(a_1, a_2, \ldots)
    \tag*{\text{\(\Box\)}}
    \]}
\end{itemize}
\label{def:refinement_step}
\end{definition}
\vspace{-5mm}

\subsection{Safe Refinement Tree}
\label{sec:search_paradigm:safe_refinement_tree}

Although each refinement step defines a syntactically valid transformation, 
executing all possible steps would lead to an excessively large and semantically irrelevant search space.
In practice, most execution errors arise from specific syntactic components of the query 
(e.g., relations, attributes, or functions), 
and only refinements targeting those components can effectively resolve the error~\cite{ning2024insights, chen2023text, shen2025study}. 
For instance, an \textit{unknown relation} error is typically resolved by replacing the relation in the \texttt{FROM} clause, whereas an \textit{unknown attribute} error is commonly addressed by replacing the invalid attribute in the \texttt{SELECT} clause.
These refinements differs from traditional rule-based generation~\cite{li_nalir_2014, yaghmazadeh_sqlizer_2017, saha_athena_2016}, whose handcrafted rules restrict what queries can be produced in the first place; SafeQL’s refinement act only after a query is generated, correcting errors without limiting the overall expressiveness of the query.
This observation motivates the notion of a \emph{safe refinement tree},
which defines the entire search space of SafeQL that can be explored through admissible refinement steps for a given error.

\begin{definition}[Safe Refinement Tree, $\mathcal{T}_{\text{safe}, q}$]
\label{def:safe_refinement_tree}
Given an query \(q\),
the safe refinement tree \(\mathcal{T}_{\text{safe}, q}\) 
is a directed tree that represents error-resolving refinement paths from $q$ and is defined as follows:

\begin{itemize}[labelindent=0em,labelsep=0.3em,leftmargin=*,itemsep=0em]
    \item The root of $\mathcal{T}_{\text{safe}, q}$ is the initial query $q$.
    \item An edge is added if the parent query produces an execution error.
    \item Each edge corresponds to a refinement step
    {\setlength{\abovedisplayskip}{2pt}
     \setlength{\belowdisplayskip}{2pt}
    \[
    q \tightarrow{r} q' \quad \text{for some} \quad r \in \mathcal{R}(\epsilon(q)).
    \]
    }
\end{itemize}
Here, $\mathcal{R}(\epsilon)$ denotes the set of admissible refinement operations 
allowed for each error $\epsilon$:

\[
\mathcal{R}(\epsilon) =
\left\{
\begin{array}{@{}l@{\quad}l@{}}
\left.
\begin{array}{@{}l@{}}
\makebox[9em][r]{\text{Relation Refinement}\;}
\end{array}
\right.
& \text{if } \epsilon = \textit{unknown relation} \\[5pt]
\vspace{1pt}
\left.
\begin{array}{@{}l@{}}
\makebox[9em][r]{\text{Relation Refinement}\;}\\
\makebox[9em][r]{\text{Join Refinement}\;}\\
\makebox[9em][r]{\text{Attribute Refinement}\;}
\end{array}
\right)
& \text{if } \epsilon = \textit{unknown attribute} \\[5pt]
\vspace{1pt}
\left.
\begin{array}{@{}l@{}}
\makebox[9em][r]{\text{Function Refinement}\;}\\
\makebox[9em][r]{\text{Attribute Refinement}\;}
\end{array}
\right)
& \text{if } \epsilon = \textit{unknown function} \\[5pt]
\vspace{1pt}
\left.
\begin{array}{@{}l@{}}
\makebox[9em][r]{\text{Value Refinement}\;}
\end{array}
\right.
& \text{if } \epsilon = \textit{empty result}  
\end{array}
\right.
\]
\begin{flushright}
\vspace{-1.7em}
$\Box$
\end{flushright}
\end{definition}

For \textit{unknown attribute} and \textit{unknown function} errors, 
multiple refinement operations are admissible because their causes are often indirect. 
An \textit{unknown attribute} may not only indicate a misspelled or mismatched column name, 
but also reveal that the necessary relation has not been joined at all. 
Likewise, an \textit{unknown function} may arise either from an undefined function name 
or from invalid arguments within its call---for example, when a missing attribute renders the function signature invalid.

Figure~\ref{fig:safe_refinement_tree} illustrates an example Safe Refinement Tree. 
The root query first fails with an \textit{unknown attribute} error because the attribute \texttt{txt} does not exist in the \texttt{Users} relation in the \texttt{FROM} clause. 
Accordingly, SafeQL explores only the admissible refinements for this error type—Attribute, Join, and Relation refinement—and Figure~\ref{fig:safe_refinement_tree} highlights three representative refinement steps:
\begin{description}[labelindent=0em,labelsep=0.3em,leftmargin=*,itemsep=0.2em]
\item[(1)] Relation Refinement (\texttt{Users} $\rightarrow$ \texttt{Posts});
\item[(2)] Join Refinement (\texttt{Users JOIN Posts ON UserID = ID});
\item[(3)] Attribute Refinement (\texttt{txt} $\rightarrow$ \texttt{name}).
\end{description}

After the attribute error is resolved—e.g., \texttt{txt} is resolved through the middle branch, where a Join Refinement introduces the missing \texttt{Posts} relation—the next level of the refinement tree produces a new failure in the function call.
Although \texttt{sum} is a valid aggregate function, it is not defined over string-typed arguments such as \texttt{txt}, leading to an \textit{unknown function} error.
At this refinement level, both function-level and attribute-level corrections are admissible, and two representative refinements are shown:
\begin{description}[labelindent=0em,labelsep=0.3em,leftmargin=*,itemsep=0.2em]
\item[(1)] Function Refinement (\texttt{sum} $\rightarrow$ \texttt{count});
\item[(2)] Attribute Refinement (\texttt{txt} $\rightarrow$ \texttt{txt\_len}).
\end{description}

\begin{figure}[h]
    \vspace{-3mm}
    \centerline{\includegraphics[width=3.5in]{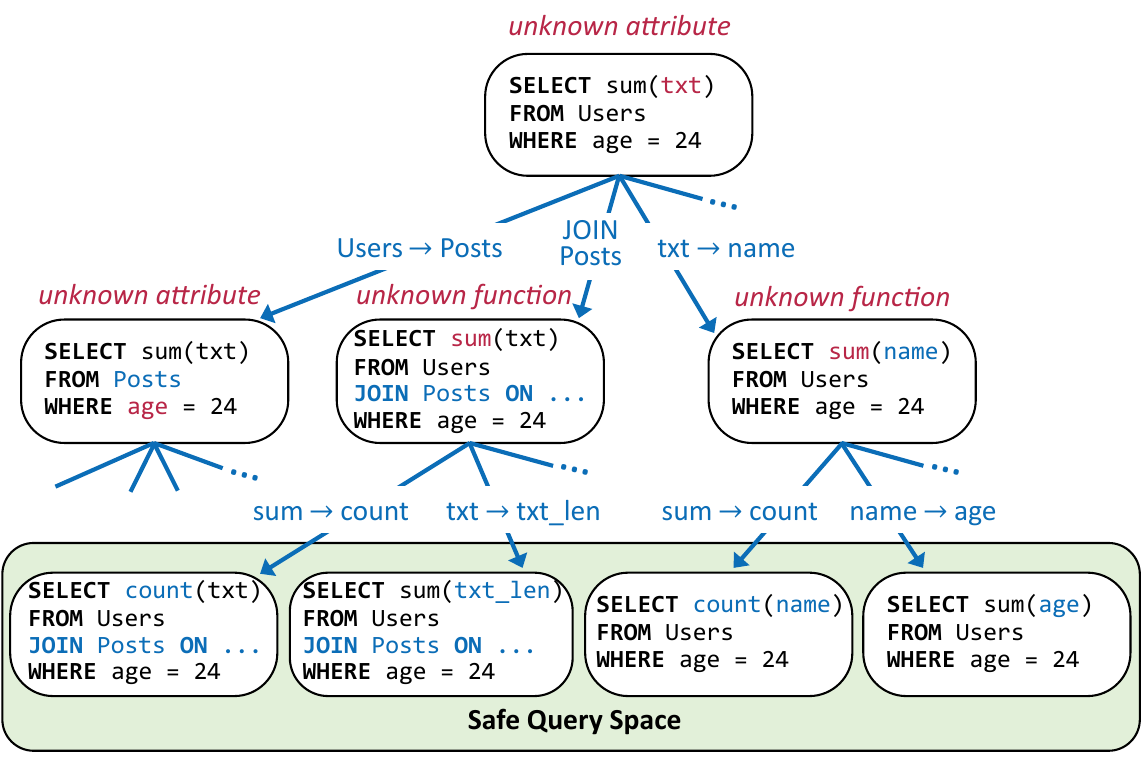}}
    \vspace{-3mm}
    \caption{Example of a Safe Refinement Tree.}
    \label{fig:safe_refinement_tree}
    \vspace{-3mm}
\end{figure}

The example above provides the intuition that each execution error can be
resolved by one of the admissible refinements defined for its error
type. To complete this intuition, we now formalize that such error-resolving
refinements are not merely illustrative but are guaranteed to exist within the
Safe Refinement Tree.

\begin{lemma}[Reachability of Error-resolving Refinements]
\label{lemma:error_refinement_reachability}
Given a database $D = (Rels, Atts, Vals, Typs, \Lambda, \Gamma)$ and the safe
refinement tree $\mathcal{T}_{\text{safe}, q}$, every execution error
$\epsilon(q)$ at a node $q \in \mathcal{T}_{\text{safe}, q}$ has a
corresponding refinement edge $(q, q')$ such that the refinement operation
$r \in \mathcal{R}(\epsilon(q))$ resolves the error under the database~$D$.
\end{lemma}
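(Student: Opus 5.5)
The plan is a case analysis over the four error types of Definition~\ref{def:execution-error}. In each case I will exhibit a concrete operation $r \in \mathcal{R}(\epsilon(q))$ and check, using only the components of $D = (Rels, Atts, Vals, Typs, \Lambda, \Gamma)$, that the resulting $q'$ no longer triggers $\epsilon(q)$. By Definition~\ref{def:safe_refinement_tree}, the edge $(q,q')$ then belongs to $\mathcal{T}_{\text{safe}, q}$: $q$ produces an error, and $r$ is admissible for that error. So it suffices to prove existence of a resolving $r$ for each error type. Throughout, ``resolves'' means that the specific faulty component reported by $\epsilon(q)$ becomes valid under $D$. It does not mean that $q'$ is error-free, since a later error may surface at the child, as in Figure~\ref{fig:safe_refinement_tree}. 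I will make this reading explicit at the start.

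\textbf{Unknown relation.} The query mentions some $R \notin Rels$. Since $Rels \neq \emptyset$, a Relation Refinement $R \to R'$ with $R' \in Rels$ removes the reference to $R$.

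\textbf{Unknown attribute.} The query references some $a$ with $a \notin \bigcup_{R \in F} \Lambda(R)$. Two subcases arise.
\begin{itemize}
\item If $a \in Atts$, then $a \in \Lambda(R')$ for some $R' \in Rels$. Either a Relation Refinement to $R'$ or a Join Refinement adding $R'$ brings $a$ into scope. Because attributes in $Atts$ are assumed distinct, $a$ binds unambiguously.
\item If $a \notin Atts$, an Attribute Refinement $a \to a'$ with any $a' \in \bigcup_{R\in F}\Lambda(R)$ resolves the reference. Here I use that $\Lambda(R)$ is nonempty for a relation in the \texttt{FROM} clause.
\end{itemize}

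\textbf{Unknown function.} Either the symbol $f$ is unsupported, or its argument types do not match its signature. In the first case, a Function Refinement to a supported $f'$ of the same arity whose signature accepts the argument types $\Gamma(a_i)$ fixes it; if necessary, an Attribute Refinement supplies an argument of a type accepted by $f$. I need a mild assumption that the DBMS supports some function of each arity over some type present in $\Gamma(Atts)$, for instance \texttt{count}, which accepts any type.

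\textbf{Empty result.} The predicate is $a \oplus v$ and returns no tuples. A Value Refinement $v \to v'$ with $v'$ chosen among the values actually stored in attribute $a$ ($v' \in Vals$, $\Gamma(v') = \Gamma(a)$) and $\oplus$ being $=$ makes the predicate satisfied by at least one tuple. This holds provided the relation is nonempty and, in the simplified grammar, $W$ is a single predicate, so no other conjunct can empty the result.

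The main obstacle is the unknown-function and empty-result cases. They depend on assumptions external to Definition~\ref{def:database-schema}: the existence of a type-compatible function, and nonempty relations where a single-predicate \texttt{WHERE} clause (from the simplified syntax) is the only source of emptiness. I will state these assumptions explicitly as part of the setting, in the same spirit as the paper's stated simplifications, rather than try to derive them. I will also note that the argument is existential: it shows that some resolving edge exists in $\mathcal{T}_{\text{safe}, q}$, not that the search finds it.
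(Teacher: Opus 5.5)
Your proposal is correct and follows the paper's proof essentially step for step: the same four-way case analysis, the same split for \emph{unknown attribute} (Relation or Join Refinement when $a$ lies elsewhere in the schema, Attribute Refinement otherwise), and Function versus Attribute Refinement for \emph{unknown function} under an explicit existence assumption. One difference there: the paper assumes the schema has an attribute of each required argument type (as in \texttt{sum(name)} $\to$ \texttt{sum(age)}), and your Attribute Refinement fallback tacitly needs that assumption too; your own function-availability assumption should be phrased as a supported $f'$ of the same arity accepting the current argument types $\Gamma(a_1),\ldots,\Gamma(a_k)$, since one edge applies only one refinement. Your empty-result case is more constructive than the paper's, which only claims a Value Refinement ``typically'' works; to complete your assumption list, also require the predicate to have the form $a \oplus v$ with $\oplus$ satisfiable by a stored value (e.g., $=$, $\le$, $\ge$), because for $a_1 \oplus a_2$ the set $\mathcal{R}(\textit{empty result})$ contains no applicable operation.
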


To formalize this guarantee, we introduce a lightweight operational semantics~\cite{hong2003introduction}
of SQL execution, in which an error judgment
$D \vdash q \Rightarrow \epsilon(q)$ is produced when the
corresponding error condition holds.  Formally:
\[
\frac{\textit{the error condition indicating a violation of execution premise}}
     {D \vdash q \Rightarrow \epsilon(q)}
\]
\vspace{-5mm}

\begin{proof}
We analyze each error type by making its operational error condition explicit and
showing that at least one refinement in $\mathcal{R}(\epsilon)$ always restores the
violated execution premise.  Let $\textit{relations}(F)$ denote the set of base
relations that appear in the \texttt{FROM} clause $F$.

\noindent\textbf{(1) \textit{Unknown Relation}.}
SQL execution requires that every relation referenced in $F$ exist in the
database schema:
\[
\frac{
    \exists R \in \textit{relations}(F),\; R \notin Rels
}{
    D \vdash \texttt{SELECT S FROM F WHERE W}
    \Rightarrow \textit{unknown relation}
}
\]
\vspace{-3mm}

A \textbf{Relation Refinement} replaces the invalid $R$ with a valid
relation $R' \in Rels$, thereby restoring the failed premise.
Thus, at least one refinement edge $(q,q')$ always exists.

\noindent\textbf{(2) \textit{Unknown Attribute}.}
An attribute must belong to the union of attribute
signatures of all relations in \textit{F}:
\[
\frac{
    a \notin \bigcup_{R \in \textit{relations}(F)} \Lambda(R)
}{
    D \vdash \texttt{SELECT a FROM F WHERE W}
    \Rightarrow \textit{unknown attribute}
}
\]
\vspace{-3mm}

This error can be repaired by one of the following refinements:
\begin{description}[labelindent=1em,labelsep=0.4em,leftmargin=*,itemsep=0.0em]

    \item[(2.1) Relation Refinement.]
    Replace a relation $R$ in the \texttt{FROM} clause with a relation
    $R'$ such that $a \in \Lambda(R')$, ensuring that the referenced
    attribute exists in the revised schema context.

    \item[(2.2) Join Refinement.]
    Add a relation $R'$ to the query such that $a \in \Lambda(R')$, so that
    $\textit{relations}(F)$ is extended with $R'$ and the attribute reference to be valid.

    \item[(2.3) Attribute Refinement.]
    Replace the unresolved attribute $a$ with an attribute $a'$ satisfying
    $a' \in \bigcup_{R \in \textit{relations}(F)} \Lambda(R)$, making the
    attribute reference to match one of the attributes provided by the
    current \texttt{FROM} clause.

\end{description}

In either case, the violated condition is recoverable:
if $a$ exists elsewhere in the schema, Relation or Join refinement brings the
corresponding relation into the query; otherwise, Attribute refinement replaces
$a$ with a valid attribute.  Hence at least one refinement edge $(q,q')$
always resolves the error.

\noindent\textbf{(3) \textit{Unknown Function}.}
A function expression may fail for one of two reasons: either the function itself is unsupported by the DBMS, or the function exists, but its argument expressions do not satisfy the type requirements imposed by its signature. Here, $Funcs$ denotes the set of all function symbols supported by
the DBMS.
\vspace{-1mm}
\[
\begin{gathered}
\frac{
    f \notin Funcs
}{
    D \vdash f(a_1,\ldots,a_k)
    \Rightarrow \textit{unknown function}
}
\\[1pt]
\frac{
    f : \tau_1,\ldots,\tau_k \rightarrow \tau_{\text{out}},\;
    \Gamma(a_i)=\tau'_i,\;
    (\tau'_1,\ldots,\tau'_k) \neq (\tau_1,\ldots,\tau_k)
}{
    D \vdash f(a_1,\ldots,a_k)
    \Rightarrow \textit{unknown function}
}
\end{gathered}
\]

This error can be repaired by one of the following refinements:
\begin{description}[labelindent=1em,labelsep=0.4em,leftmargin=*,itemsep=0.0em]

    \item[(3.1) Function Refinement.]
    If $f \notin Funcs$, replace $f$ with a supported function
    $f' \in Funcs$, restoring the violated condition.
    \item[(3.2) Attribute Refinement.]
    If the function symbol exists, but the argument types do not match its signature
$f : \tau_1,\ldots,\tau_k \rightarrow \tau_{\text{out}}$, we assume the schema
includes at least one attribute of each required type $\tau_i$.
Under this assumption, the violation is repaired by replacing one or more
arguments $a_i$ with attributes $a_i'$ such that $\Gamma(a_i') = \tau_i$,
thereby restoring the required types.

\end{description}

In either case, the violation is recoverable: it is replaced, and a type mismatch is fixed via argument refinement. Thus, at least one refinement edge $(q,q')$ resolves the error.

\noindent\textbf{(4) \textit{Empty Result}.}
This error occurs when the query is syntactically and semantically valid, but
its predicate evaluates to false for all tuples.  Typical causes include
overly restrictive comparison operators (e.g., $=$, $>$, $<$ applied in a way
that no tuple can satisfy), comparisons against values that do not exist in
the database, or contradictory filter conditions.  In such cases, the error
can be repaired by adjusting the value conditions, such as relaxing a
comparison, modifying a constant, or substituting a value that appears in the
data—so that the predicate becomes satisfiable.
Accordingly, a Value Refinement typically provides a refinement edge $(q,q')$
that adjusts the predicate to produce a non-empty result.

Taken together, these four cases show that every violated execution premise
can be repaired by an admissible refinement.
\qedhere
\end{proof}
\vspace{-2mm}

To illustrate how the operational semantics used in the proof relate to the 
refinement process, we present two examples that instantiate the refinement 
steps shown in Figure~\ref{fig:safe_refinement_tree}.

\noindent\textbf{Example 1 (Join Refinement; Case 2.2).}
{\setlength{\abovedisplayskip}{0pt}
 \setlength{\belowdisplayskip}{0pt}
\[
\frac{
    \texttt{txt} \notin \Lambda(\texttt{Users})
}{
    D \vdash \texttt{SELECT txt FROM Users}
    \Rightarrow \textit{unknown attribute}
}
\]
}
\vspace{-2mm}

If \(\texttt{txt} \in \Lambda(\texttt{Posts})\), then Join Refinement extends \texttt{FROM} clause:

\vspace{-3mm}
{\setlength{\abovedisplayskip}{0pt}
 \setlength{\belowdisplayskip}{0pt}
\[
\texttt{FROM Users}
\;\rightarrow\;
\texttt{FROM Users JOIN Posts ON UserID = ID}.
\]
}
The violated premise is restored because \(\texttt{txt} \in \Lambda(\texttt{Posts})\), 
This corresponds to the middle edge of the first level shown in Figure~\ref{fig:safe_refinement_tree}.

\noindent\textbf{Example 2 (Attribute Refinement; Case 3.2).}
\vspace{-2mm}

{\setlength{\abovedisplayskip}{0pt}
 \setlength{\belowdisplayskip}{0pt}
 \[
\frac{
    \texttt{sum} : \texttt{int} \rightarrow \texttt{int},\quad
    \Gamma(\texttt{name}) = \texttt{text},\quad
    \texttt{text} \neq \texttt{int}
}{
    D \vdash \texttt{SELECT sum(name) FROM Users}
    \Rightarrow \textit{unknown function}
}
\]
}
If the schema provides an attribute of the required type, such as 
\(\Gamma(\texttt{age}) = \texttt{int}\), then Attribute Refinement replaces 
the argument:
{\setlength{\abovedisplayskip}{1pt}
 \setlength{\belowdisplayskip}{0pt}
\[
\texttt{sum(name)} \;\rightarrow\; \texttt{sum(age)}.
\]
}
This corresponds to the rightmost edge of the second level shown in Figure~\ref{fig:safe_refinement_tree}.
Together, the Lemma~\ref{lemma:error_refinement_reachability} ensures that every erroneous node in the safe refinement tree
has at least one valid refinement leading to a safe (i.e., executable) query.
Consequently, the refinement process is guaranteed to reach 
a region of error-free queries, which we define as the \emph{safe query space}, formalized below.

\vspace{-1mm}
\begin{definition}[Safe Query Space]
Every leaf node $q'$ in the refinement tree $\mathcal{T}_{\text{safe}, q}$ satisfies 
$\llbracket q' \rrbracket \neq \epsilon(q)$. 
We refer to the set of all such leaf nodes as the \emph{Safe Query Space}:
\vspace{-0.5mm}
{\setlength{\abovedisplayskip}{0pt}
 \setlength{\belowdisplayskip}{0pt}
 \[
\mathcal{Q}_{\text{safe}} = 
\{\, q' \mid q' \text{ is a leaf node of } \mathcal{T}_{\text{safe}, q} 
\text{ and } \llbracket q' \rrbracket \neq \epsilon(q') \,\}.
\]
}
\end{definition}
\vspace{-5mm}

\begin{proof}
By construction of $\mathcal{T}_{\text{safe}, q}$, a query $q$ is refined when $\llbracket q \rrbracket = \epsilon(q)$.
Thus, if a node $q'$ has no children (i.e., it is a leaf), 
it must satisfy $\llbracket q' \rrbracket \neq \epsilon(q')$.
Therefore, all leaf nodes in $\mathcal{T}_{\text{safe}, q}$ are error-free, 
and their set constitutes the safe query space.
\end{proof}

In Figure~\ref{fig:safe_refinement_tree}, the green-highlighted queries correspond exactly 
to such leaf nodes in $\mathcal{T}_{\text{safe}, q}$ and thus serve as concrete instances 
of queries contained in the safe query space.

%% file: sec_4.tex
\section{Search Algorithm}
\label{sec:search_algorithm}

This section explains how SafeQL navigates the safe query space
to identify the most semantically faithful executable query.
We first define the search objective based on semantic distance,
then describe a best-first search algorithm that prioritizes
refinements closest to the original query.

\subsection{Search Objective under Semantics}
\label{sec:search_paradigm:search_objectives}

Having defined the \emph{safe query space} 
$\mathcal{Q}_{\text{safe}}$,
we now formalize the objective that governs the search process within this space.
Even among valid candidates in $\mathcal{Q}_{\text{safe}}$,
refinements may differ in how well they preserve the original intent.
To quantify such a deviation, we define a \textit{semantic distance}
that jointly measures structural change and embedding-level drift.

\begin{definition}[Semantic Distance, $\delta$]
For two queries $q_i$ and $q_j$, the semantic distance $\delta(q_i, q_j)$ is defined as
    {\setlength{\abovedisplayskip}{2pt}
     \setlength{\belowdisplayskip}{2pt}
    \[
    \delta(q_i, q_j)
    =
    \alpha \cdot d_{\text{struct}}(q_i, q_j)
    + (1 - \alpha) \cdot d_{\text{embed}}(q_i, q_j),
    \;\; \alpha \in [0,1].
    \]
    }
\vspace{-4mm}
\begin{itemize}[leftmargin=*]
    \item 
    $d_{\text{struct}}(q_i, q_j)$ measures structural deviation 
    (e.g., relation or join refinements) using the normalized tree edit distance~\cite{zhang_simple_1989}
    between the abstract syntax trees of $q_i$ and $q_j$, a standard metric for comparing hierarchical structures such as XML and JSON.
    
    \item 
    $d_{\text{embed}}(q_i, q_j)$ measures semantic deviation based on embeddings: 
    \vspace{-2mm}
    {\setlength{\abovedisplayskip}{0pt}
     \setlength{\belowdisplayskip}{0pt}
    \[
    d_{\text{embed}}(q_i, q_j)
    = \sum_{(t \rightarrow t')}
    \|v_t - v_{t'}\|,
    \]
    }
    where $(t \!\rightarrow\! t')$ denotes each replaced token 
    (relation, attribute, or value) and $v_t$ is the embedding vector of token $t$
    computed by a neural embedding model~\cite{reimers_sentence-bert_2019}.
\hfill $\Box$
\end{itemize}
\end{definition}

This hybrid formulation is conceptually analogous to the sparse-dense hybrid retrieval 
in information retrieval, which combines complementary lexical and semantic signals
to improve ranking robustness~\cite{santhanam_colbertv2_2022, formal_splade_2021}.
\textcolor{notice}{
Similarly, combining $d_{\text{struct}}$ and $d_{\text{embed}}$
provides a more robust and balanced measure of syntactic fidelity and semantic preservation, while remaining a heuristic rather than a strict notion of semantic equivalence.
The former enforces grammatical consistency,
while the latter maintains semantic proximity in embedding space.
}
The weighting parameter $\alpha$ controls this trade-off,
and we empirically validate its effect in Section~\ref{sec:evaluation:efficiency}.

With semantic distance formalized, we now explain how SafeQL leverages it during the refinement process.
Unlike regeneration-based methods that repeatedly apply $\mathcal{Q} \rightarrow \mathcal{Q}$ by regenerating full SQLs after every failure,
SafeQL redefines refinement as a guided search from the general query space $\mathcal{Q}$
into the executable subset $\mathcal{Q}_{\text{safe}}$ under the database $D$.
In this view, each erroneous query serves not as a terminal failure but as a starting point from which SafeQL identifies the nearest semantically consistent counterpart in $\mathcal{Q}_{\text{safe}}$ under the distance metric $\delta$.
We formalize this as follows:

\begin{definition}[Search-based Refinement]
\label{def:search-based-refinement}

The refinement process finds the safe query nearest to the input query $q$ within a  distance~$\delta$.
\[
f_{\text{refine}}^{\text{search}} :
\mathcal{Q} \times D \to \mathcal{Q}_{\text{safe}},
\quad
q_{\text{safe}} = \arg\min_{q' \in \mathcal{Q}_{\text{safe}}} \delta(q, q').
\]
\end{definition}

\subsection{Best-first Search-based Refinement}
\label{sec:search_algorithm:best_first_search}

The overall refinement process follows a best-first exploration of the safe query space,
as shown in Algorithm~\ref{alg:best_first_search}.
Unlike exhaustive search, which explore all refinements uniformly, best-first search naturally limits exploration in a branch-and-bound style~\cite{narendra1977branch}, accelerating convergence.
\textcolor{notice}{
The search initializes a priority queue ordered by semantic distance from the LLM-generated initial query, expanding the closest candidates first. 
This design takes the initial query as the reference point under the assumption that it reflects the LLM’s interpretation of the user’s intent and thus serves as a reasonable available signal for recovering the original intent. 
When this initial query is fundamentally misgenerated, search-based refinement alone is insufficient, and SafeQL therefore incorporates a hybrid refinement strategy to regenerate the query and establish a better starting point, as described later in Section~\ref{sec:system:hybrid}.}

During each iteration of the loop (Lines~\ref{alg:best_first_search:loopStart}-\ref{alg:best_first_search:loopEnd}),
the algorithm dequeues the nearest candidate query $q$ (Line~\ref{alg:best_first_search:pop}) and attempts to execute it
within the \texttt{tryExecute} block (Lines~\ref{alg:best_first_search:tryStart}).
If execution succeeds, the algorithm immediately returns $q$
as the nearest safe query (Line~\ref{alg:best_first_search:return}),
thus guaranteeing that the first valid query discovered is the one most semantically similar to the original input.

When execution fails, SafeQL captures the corresponding error type $\epsilon(q)$ within the \texttt{catchError} block
(Lines~\ref{alg:best_first_search:catchStart}-\ref{alg:best_first_search:catchEnd})
and uses it to guide subsequent refinements.
The \textsc{GenerateRefinements} procedure (Line~\ref{alg:best_first_search:generate})
selects the appropriate refinement operations (e.g., relation, join, or attribute refinements) based on $\epsilon$,
while \textsc{PruneRefinements} (Line~\ref{alg:best_first_search:pruneStart})
filters out redundant or semantically invalid refinements.
Each surviving refinement $r$ is then applied to produce a new candidate query $q'$ (Line~\ref{alg:best_first_search:applyStart}),
which is enqueued for future exploration only if it has not been visited previously
(Lines~\ref{alg:best_first_search:enqueueStart}).

\vspace{-2mm}

%% file: sec_5.tex
\section{Pruning Methods}
\label{sec:pruning_methods}

Although all queries in the safe space are executable, the space can grow rapidly as each refinement generates many candidates.
For each error, all schema-level elements (relations, attributes, values) are considered, causing a single step to branch into dozens of candidates.
Value refinements can expand even further, since every possible value in the database becomes a potential substitution.

For example, in Figure~\ref{fig:pruning-example}, an \textit{unknown attribute} error for \texttt{user} leads SafeQL to try all attributes (e.g., \texttt{name}, \texttt{ID}, \texttt{age}).
Even after correction, \texttt{WHERE name = `leo'} may produce an \textit{empty result}, triggering another round of refinements where \texttt{`leo'} is substituted with every possible value in the table—\texttt{`Leo'}, \texttt{`Hana'}, \texttt{`James'}, and so on.
This results in combinatorial growth of candidates across refinement layers.

\vspace{-2mm}
\begin{proposition}[Exponential Size of Safe Query Space]
For a safe refinement tree $\mathcal{T}_{\textnormal{safe}, q}$ with branching factor $r$ and depth $d$, the total size of the search space is $\mathcal{O}(r^d)$, since each refinement can generate $r$ new candidates at every level.
\end{proposition}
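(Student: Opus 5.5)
The plan is to bound the number of nodes of $\mathcal{T}_{\text{safe}, q}$ level by level, using only its construction in Definition~\ref{def:safe_refinement_tree}. Let $N_\ell$ be the number of nodes at depth $\ell$, so $N_0 = 1$ (the root $q$). By the definition, a node at depth $\ell$ has children only if it produces an execution error $\epsilon$. Its children are then obtained by applying operations from $\mathcal{R}(\epsilon)$, and by hypothesis there are at most $r$ of them. Here $r$ is understood as the maximum, over all error types and all nodes, of the number of admissible refinement steps. For instance, for an \emph{unknown attribute} error it is at most $|Rels| + |Rels| + |Atts|$ counting relation, join and attribute substitutions, and for \emph{empty result} at most $|Vals|$. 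Hence $N_{\ell+1} \le r \cdot N_\ell$, and by induction $N_\ell \le r^\ell$.

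Summing over the levels $0,\dots,d$ gives the total size
\[
\sum_{\ell=0}^{d} N_\ell \;\le\; \sum_{\ell=0}^{d} r^\ell \;=\; \frac{r^{d+1}-1}{r-1} \;\le\; 2\,r^{d} \quad (r \ge 2),
\]
which is $\mathcal{O}(r^d)$. The cases $r \le 1$ are trivial, since the tree is then a path of at most $d+1$ nodes. Because $\mathcal{Q}_{\text{safe}}$ is a subset of the leaves, the same bound applies to the safe query space itself.

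To justify the word \emph{exponential}, the bound should also be shown to be attained in the worst case. I would give a construction in the spirit of the example in Figure~\ref{fig:pruning-example}. Take a query with $d$ independent faulty components, for example $d$ unknown attributes or $d$ value predicates each yielding an empty result. Choose the schema so that each component admits $r$ distinct substitutions and every partially repaired query still errors until all $d$ components are fixed. Lemma~\ref{lemma:error_refinement_reachability} guarantees that each erroneous node indeed has refinement edges, so no branch terminates early. Distinct substitution sequences yield distinct queries, so there are $r^d$ nodes at depth $d$, giving $\Theta(r^d)$.

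The main obstacle is making ``branching factor'' well defined. Admissible steps depend on the error type and on the schema sizes $|Rels|$, $|Atts|$ and $|Vals|$, and deduplication of visited queries could in principle shrink the tree. I would handle the first by defining $r$ as the maximum fan-out, which is finite because the schema is finite. For the second, I would note that deduplication only removes nodes, so it cannot break the upper bound. The lower-bound construction avoids collisions by using disjoint components.
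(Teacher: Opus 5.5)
Your argument is correct and is the paper's own one-line justification (each node spawns at most $r$ candidates per level) made explicit as the induction $N_{\ell+1}\le r N_\ell$ plus the geometric sum. Your worst-case construction for tightness goes beyond what the paper attempts and beyond what the $\mathcal{O}(r^d)$ claim requires. One small correction: for $r=1$ the path has $d+1$ nodes, which is not $\mathcal{O}(1^d)$, so that degenerate case should be excluded, or covered by a bound such as $(d+1)\max(1,r)^d$, rather than called trivial.
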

\vspace{-2mm}

This exponential growth makes exhaustive enumeration infeasible even for moderately complex databases.
To mitigate this issue, SafeQL introduces pruning mechanisms that restrict the number of refinements explored at each step.
These strategies—type-based pruning and top $K$ refinement pruning—jointly suppress unpromising branches in the refinement tree, effectively reducing fan-out while preserving meaningful candidates.

\vspace{-2mm}
\begin{algorithm}[t]
\SetAlgoLined
\SetArgSty{textnormal}
\KwIn{$q$ \tcc*[f]{original query}} 
\KwIn{$\alpha$ \tcc*[f]{structure–embedding weight factor}}
\KwIn{\textit{refineQueue}\!\!\tcc*[f]{priority queue ordered by $\delta(q, q')$}}

\SetKwFunction{Fexecute}{execute}
\SetKwFunction{FgenerateRef}{generateRefinements}
\SetKwFunction{FpruneRef}{pruneRefinements}
\SetKwFunction{FapplyRef}{applyRefinements}
\SetKwFunction{Fdistance}{calculateDistance}
\SetKwFunction{FsearchBasedRefine}{searchBasedRefine}

\SetKwBlock{Try}{tryExecute\,(\textnormal{$q_{\text{current}}$)}}{end}
\SetKwBlock{Catch}{catchError}{end}
\SetKwProg{Fn}{Function}{:}{end}

push ($q$, 0) into \textit{refineQueue}\; \label{alg:best_first_search:initStart}
mark $q$ as visited\; \label{alg:best_first_search:initEnd}

\Fn{\FsearchBasedRefine{$q$}}{
\While{\textit{refineQueue} is not empty}{ \label{alg:best_first_search:loopStart}
    $q_{\text{current}} \leftarrow \argminA_{q' \in \textit{refineQueue}} \delta(q, q')$\; \label{alg:best_first_search:pop}
    mark $q_{\text{current}}$ as visited\;
    \Try{ \label{alg:best_first_search:tryStart}
        \Return $q_{\text{current}}$; \tcc*[f]{nearest safe query found} \label{alg:best_first_search:return}
    }\Catch{ \label{alg:best_first_search:catchStart}
        $\mathcal{R} \leftarrow$ \FgenerateRef{$\epsilon(q_{\textnormal{current}})$}\; \label{alg:best_first_search:generate}
        $\mathcal{R}_{\text{pruned}} \leftarrow$ \FpruneRef{$\mathcal{R}$}\; \label{alg:best_first_search:pruneStart}
        \ForAll{$r \in \mathcal{R}_{\text{pruned}}$}{
            $q' \leftarrow$ \FapplyRef$(q_{\text{current}}, r)$\; \label{alg:best_first_search:applyStart}
            \If{$q'$ not visited}{
                $\delta \leftarrow$ \Fdistance{$q, q', \alpha$}\;
                push $(q', \delta)$ into \textit{refineQueue}\; \label{alg:best_first_search:enqueueStart}
            }
        }
    } \label{alg:best_first_search:catchEnd}
} \label{alg:best_first_search:loopEnd}
}

\caption{Best-first Search-based Refinement}
\label{alg:best_first_search}
\end{algorithm}

\begin{figure}
\vspace{-3mm}
\centerline{\includegraphics[width=3.0in]{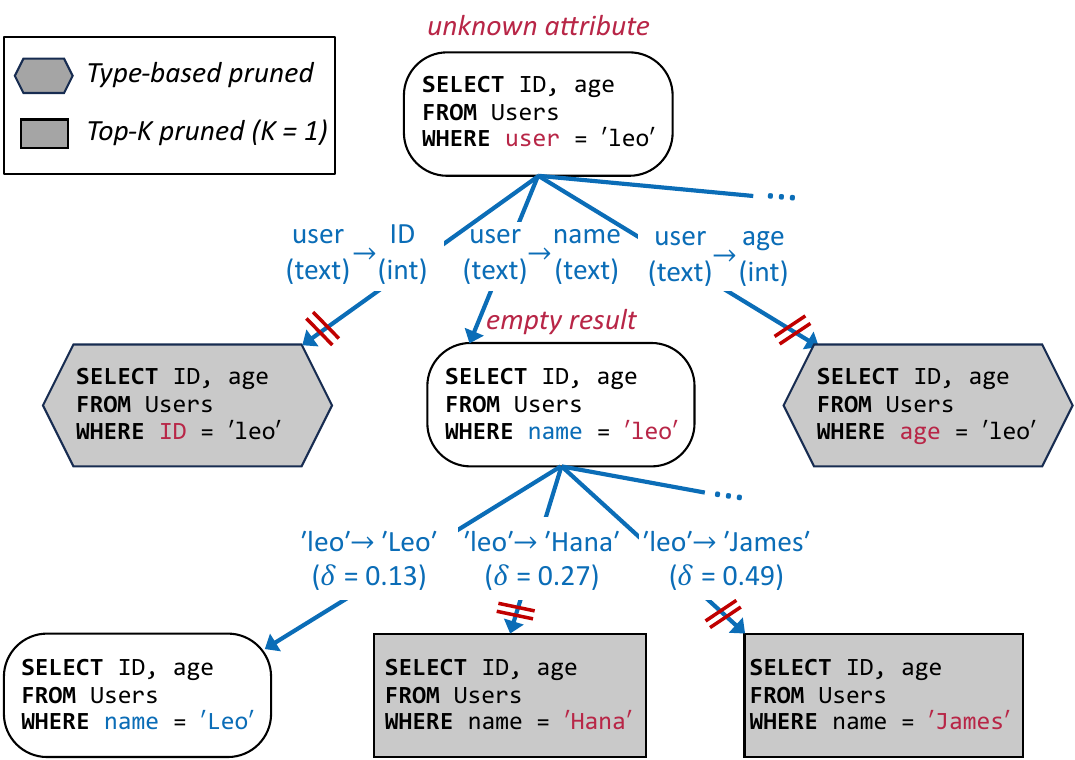}}
\vspace{-2mm}
\caption{Example of pruning in the safe refinement tree.}
\label{fig:pruning-example}
\vspace{-5mm}
\end{figure}

\subsection{Type-based Refinement Pruning}
\label{sec:search_algorithm:type_based_pruning}

Type information provides a strong pruning signal because any type mismatch immediately yields an invalid query.
In SQL, each operator and function has a fixed input signature;  
thus, if a substituted attribute or value violates this signature, execution inevitably results in an error.
SafeQL leverages the typing environment
$\Gamma$
to discard such candidates before semantic evaluation.

Algorithm~\ref{alg:type_based_pruning} presents this mechanism.
It validates refinements under $\Gamma$ and prunes type-inconsistent candidates.
In Figure~\ref{fig:pruning-example}, \texttt{WHERE user = `leo'} yields \texttt{name}, \texttt{ID}, and \texttt{age}, with
$\Gamma(\texttt{name})=\texttt{TEXT}$ and $\Gamma(\texttt{ID})=\Gamma(\texttt{age})=\texttt{INT}$.
Applying \textsc{TypeCheck}, only \texttt{name} is retained since \texttt{`leo'} is \texttt{TEXT}.
This prevents non-executable branches from entering the search.

\vspace{-2mm}
\subsection{Top-$K$ Refinement Pruning}
\label{sec:search_algorithm:top_k_pruning}

Even after type-based pruning, many refinements remain type-consistent but semantically distant.
To improve efficiency, SafeQL ranks candidates by embedding similarity and retains only the top-$K$ per category.
This is particularly beneficial for value refinements, where the out-degree is large and most database values are semantically unrelated to the query; in such cases, embedding-based similarity provides an effective signal for filtering out implausible candidates.
A smaller $K$ yields higher efficiency but risks over-pruning, while a larger $K$ improves accuracy at increased search cost, a tradeoff we examine empirically in Section~\ref{sec:evaluation:efficiency}.

Algorithm~\ref{alg:top_k_pruning} summarizes this mechanism.
SafeQL retrieves candidates, ranks them by semantic distance, and retains only the top-$K$.
Examples are shown in Figure~\ref{fig:pruning-example}.
This pruning suppresses unpromising branches and focuses refinement on relevant candidates.


{\setlength{\textfloatsep}{4pt}
 \setlength{\floatsep}{4pt}
 \setlength{\intextsep}{4pt}
 \setlength{\abovecaptionskip}{0pt}
 \setlength{\belowcaptionskip}{0pt}
\begin{algorithm}
\SetAlgoLined
\SetArgSty{textnormal}
\SetKwProg{Fn}{Function}{:}{end}
\SetKwFunction{FpruneRef}{pruneRefinements}
\SetKwFunction{FcheckType}{typeCheck}
\SetKw{Return}{return}

\Fn{\FpruneRef{$\mathcal{R}$}}{
    \ForEach{$r \in \mathcal{R}$}{
        \Switch{$r$}{
            \Case{Attribute Refinement: $a_1 \oplus a_2 \tightarrow{r} a_1' \oplus a_2$}{
                $\tau_1,\;\tau_2 \leftarrow \Gamma(a_1'),\; \Gamma(a_2)$\; \label{alg:type_based_pruning:fetch}
                \If{\FcheckType{$\oplus,\; \tau_1,\; \tau_2$} pass}{add $r$ to $\mathcal{R}_{\textnormal{pruned}}$\; \label{alg:type_based_pruning:check}}
            }
            \Case{Value Refinement: $a \oplus v \tightarrow{r} a \oplus v'$}{
                $\tau_a,\;\tau_v \leftarrow \Gamma(a),\; \Gamma(v')$\;
                \If{\FcheckType{$\tau_a,\; \tau_v$} pass}{add $r$ to $\mathcal{R}_{\textnormal{pruned}}$\;}
            }
            \Case{Function Refinement: $f \tightarrow{r} f'$}{
                \If{\FcheckType{$f',\; \Gamma(a_1),\; \ldots$} pass}{add $r$ to $\mathcal{R}_{\textnormal{pruned}}$\;}
            }}}
}
\caption{\textsc{Type-Based Refinement Pruning}}
\label{alg:type_based_pruning}
\end{algorithm}
}

{
\setlength{\textfloatsep}{4pt}
 \setlength{\floatsep}{4pt}
 \setlength{\intextsep}{4pt}
 \setlength{\abovecaptionskip}{0pt}
 \setlength{\belowcaptionskip}{0pt}
\vspace{-2mm}
\begin{algorithm}
\SetAlgoLined
\SetArgSty{textnormal}
\SetKwIF{If}{ElseIf}{Else}{if}{}{else if}{else}{}
\SetKwProg{Fn}{Function}{:}{end}
\SetKwFunction{FpruneRef}{pruneRefinements}
\SetKw{Return}{return}

\KwIn{$K$ \tcc*[f]{number of nearest neighbors to explore}}

\Fn{\FpruneRef{$\mathcal{R}$}}{
    \label{alg:top_k_pruning:start}  

    \ForEach{$r \in \mathcal{R}$}{
      \label{alg:top_k_pruning:loopStart}

        \Switch{$r$}{

            \Case{Relation Refinement: $R \tightarrow{r} R'$}{
                \If{
                  $|\{x \in Rels : \|v_R - v_x\| \le \|v_R - v_{R'}\|\}| \le K$
                }{
                    add $r$ to $\mathcal{R}_{\textnormal{pruned}}$\;
                }
            }

            \Case{Attribute Refinement: $a \tightarrow{r} a'$}{
                \If{
                  $|\{x \in Atts : \|v_a - v_x\| \le \|v_a - v_{a'}\|\}| \le K$
                }{
                    add $r$ to $\mathcal{R}_{\textnormal{pruned}}$\;
                }
            }

            \Case{Value Refinement: $v \tightarrow{r} v'$}{
                \If{
                  $|\{x \in Vals : \|v_v - v_x\| \le \|v_v - v_{v'}\|\}| \le K$
                }{
                    add $r$ to $\mathcal{R}_{\textnormal{pruned}}$\;
                }
            }

            \Case{Function Refinement: $f \tightarrow{r} f'$}{
                \If{
                  \!\!$|\{x \in Func : \|v_f - v_x\| \le \|v_f - v_{f'}\|\}| \le K$ 
                }{
                    add $r$ to $\mathcal{R}_{\textnormal{pruned}}$\;
                }
            }

        }

      \label{alg:top_k_pruning:loopEnd}
    }

}
\caption{\textsc{Top-$K$ Refinement Pruning}}
\label{alg:top_k_pruning}
\end{algorithm}
}

\vspace{-4mm}

%% file: sec_6.tex
\section{System Architecture Design}
\label{sec:system}

Although the core of SafeQL lies in its search-based refinement algorithm,
realizing it as a refinement framework for building practical text-to-SQL systems requires addressing several system-level challenges.  
Unlike a conceptual refinement algorithm, an actual implementation must integrate deeply with the DBMS, eliminate redundant computation, remain robust against fundamentally incorrect queries, support syntactically complex SQL structures, and stay compatible with diverse Text-to-SQL systems. Consequently, SafeQL’s architecture is built around five key design considerations:
\begin{itemize}[labelindent=0em,labelsep=0.5em,leftmargin=*,itemsep=0em]
\item \textbf{In-DBMS integration} for precise error detection.
\item \textbf{Systematic optimization} for scalable refinement.
\item \textbf{Hybrid refinement} that complements search with regeneration.
\item \textbf{Grammar extensions} enabling broad SQL coverage. 
\item \textbf{Compatibility} with diverse Text-to-SQL systems as a post-generation refinement layer. We describe each aspect below.
\end{itemize}

\vspace{-3mm}
\subsection{In-DBMS Integration}
\label{sec:system:integration}

\textcolor{notice}
{
External Text-to-SQL systems rely on error strings, which may mention identifiers (e.g., unknown relation \texttt{R}) but do not reliably indicate error locations; some DBMSs provide no positions, and even when they do (e.g., position 15), they refer only to character offsets and cannot distinguish multiple occurrences such as \texttt{SELECT R.a FROM R}.
Locating the correct occurrence thus requires additional parsing, which is ambiguous under aliasing, nested queries, or repeated references.
SafeQL instead performs refinement inside the DBMS over the \textit{Abstract Syntax Tree (AST)}.
When a query $q_i$ fails, the analyzer directly identifies the corresponding AST node.
Modifications are restricted to these nodes; since the AST encodes attribute–relation links (e.g., \texttt{R.a} refers to \texttt{R}), updates can be applied without affecting unrelated occurrences, and each refined query is re-analyzed and validated.
This enables precise and safer structurally grounded refinement beyond text-based approaches, despite slightly higher implementation complexity.
}

\vspace{-3mm}
\subsection{Systematic Optimization}
\label{sec:system:optimization}

Executing refinement inside the DBMS introduces a new bottleneck: repeated embedding and similarity computations across overlapping query fragments.
To eliminate such redundancy, SafeQL incorporates two in-database optimization layers:

\begin{itemize}[labelindent=0em,labelsep=0.5em,leftmargin=*,itemsep=0em]
    \item \textbf{Embedding Cache.}
    Embeddings of relations, attributes, and values are precomputed and stored on disk, while frequently accessed items are cached in memory. This prevents repeated embedding computation across refinements.
    
    \item \textbf{Vector Index.}
    To efficiently perform nearest-neighbor search (particularly for Top-$K$ pruning),
    SafeQL builds an HNSW-based vector similarity index over cached embeddings.
\end{itemize}

\textcolor{notice}
{
Since these structures are maintained in the DBMS, tuple and embedding updates are consistently reflected in the vector index via a vector extended relational system.
Optimizing update maintenance via incremental techniques~\cite{xu2023spfresh, Yu26Greater} is left for future work.
}

\vspace{-6mm}
\subsection{Hybrid Refinement Strategy}
\label{sec:system:hybrid}

\textcolor{notice}
{
Even with precise in-database refinement, some queries are fundamentally misgenerated.
To handle these, SafeQL adopts a hybrid strategy combining search-based and regeneration-based refinement.
When the number of refinement steps reaches a threshold (default 100) without producing a valid query, SafeQL triggers a regeneration loop.
Such cases typically indicate that the initial query is too far from a valid structure for localized refinement to succeed, in which case an LLM generates a new candidate conditioned on the original input, the current query, and the DBMS error feedback.
This also addresses LLM-side errors, such as invalid or even non-parsable generations by reinitializing the search from a new candidate.
The relative contribution of search-based refinement and hybrid fallback is analyzed in Table~\ref{tab:safeql_delta} and Subsection~\ref{sec:evaluation:main results:bird}.
This hybrid design ensures robustness to deeply misgenerated queries without undermining the efficiency of search-based refinement.
}

\vspace{-1mm}
\subsection{Grammar Extensions}
\label{sec:system:grammar}

Practical Text-to-SQL systems must handle more complex SQL than the simplified grammar presented in Section~\ref{sec:search_space}.
\textcolor{notice}
{
SafeQL therefore extends its support to standard SQL constructs appearing in benchmarks such as Bird~\cite{li2024can} and Spider~\cite{yu-etal-2018-spider}, including:
\begin{itemize}[labelindent=0em,labelsep=0.5em,leftmargin=*,itemsep=0em]
    \item \textbf{Nested subqueries:} 
    supports recursively scoped queries by refining subqueries in an inside-out manner. 
A subquery $q_s$ is recursively refined into $q_s'$ (Line~\ref{alg:nested:recurse}) and replaced (Line~\ref{alg:nested:replace}), after which refinement continues on the outer query (Line~\ref{alg:nested:return}). 
The output $\Lambda(q_s')$ is exposed to the DBMS analyzer (Line~\ref{alg:nested:schema}), enabling proper scope resolution for outer references.
    \item \textbf{Alias refinement:} 
    extends relation and attribute refinement to handle alias normalization (e.g., \texttt{R.a}).
    \item \textbf{Clause-level extensions:} 
    support clauses such as \texttt{GROUP BY}, \texttt{ORDER BY}, and \texttt{LIMIT}. 
    For example, \texttt{ORDER BY O} takes the form ($\texttt{O} ::= a \;|\; f(a_1, a_2, \ldots)$) and is refined as: \\ 
    $\texttt{ORDER BY}\; a \tightarrow{r} a',\;
f(a_1, ...) \tightarrow{r} f(a_1', ...),\;
f(a_1, ...) \tightarrow{r} f'(a_1, ...)$.
\texttt{GROUP BY} and \texttt{LIMIT} are handled analogously.
\end{itemize}
}

\vspace{2mm}

{
{\setlength{\textfloatsep}{4pt}
 \setlength{\floatsep}{4pt}
 \setlength{\intextsep}{4pt}
 \setlength{\abovecaptionskip}{0pt}
 \setlength{\belowcaptionskip}{0pt}
\SetAlCapFnt{\color{notice}}
\SetAlCapNameFnt{\color{notice}}
\vspace{-2mm}
\begin{algorithm}

\SetAlgoLined
\SetArgSty{textnormal}
\SetKwProg{Fn}{Function}{:}{end}
\SetKwFunction{FrecursiveRefine}{recursiveRefine}
\SetKwFunction{FsearchBasedRefine}{searchBasedRefine}
\SetKw{Return}{return}
\Fn{\FrecursiveRefine{$q$}}{
    \If{$q$ contains a subquery $q_s$}{ \label{alg:nested:if}
        $q_s' \leftarrow$ \FrecursiveRefine{$q_s$}\; \label{alg:nested:recurse}

        $\Lambda(q_s') \leftarrow \{a_1, a_2, \dots \}$; \label{alg:nested:schema}
        \tcc*[f]{$q_s' = \texttt{SELECT } a_1, a_2, \dots$}

        replace $q_s$ in $q$ with $q_s'$\; \label{alg:nested:replace}
    }
    \Return \FsearchBasedRefine{$q$}; \label{alg:nested:return}
}
\caption{\textcolor{notice}{\textsc{Nested Subquery Refinement}}}
\label{alg:nested_subquery_refinement}
\end{algorithm}
\vspace{-4mm}
}

\subsection{Compatibility with Text-to-SQL Systems}
\label{sec:system:compatibility}

SafeQL integrates with diverse Text-to-SQL systems as a post-generation refinement layer.
It attaches to prompt-based systems without modifying prompts and applies to agent-based systems by refining their final queries to ensure safe execution in the DBMS.


\vspace{-2mm}
\subsection{Architecture}
\label{sec:system:architecture}

Figure~\ref{fig:architecture} summarizes how these components interact within the SafeQL architecture.  
When a query \( q_i \) fails over the database \( D \), the query refiner operates inside the DBMS to locate the error source \( \epsilon(q_i) \) and generate refined candidates \( q_i' \) that correct the erroneous AST nodes (Section~\ref{sec:system:integration}), while leveraging cached embeddings and vector indices for efficient similarity computation (Section~\ref{sec:system:optimization}).  

\vspace{-2mm}
\begin{figure}[hbtp]
    \centerline{\includegraphics[width=2.8in]{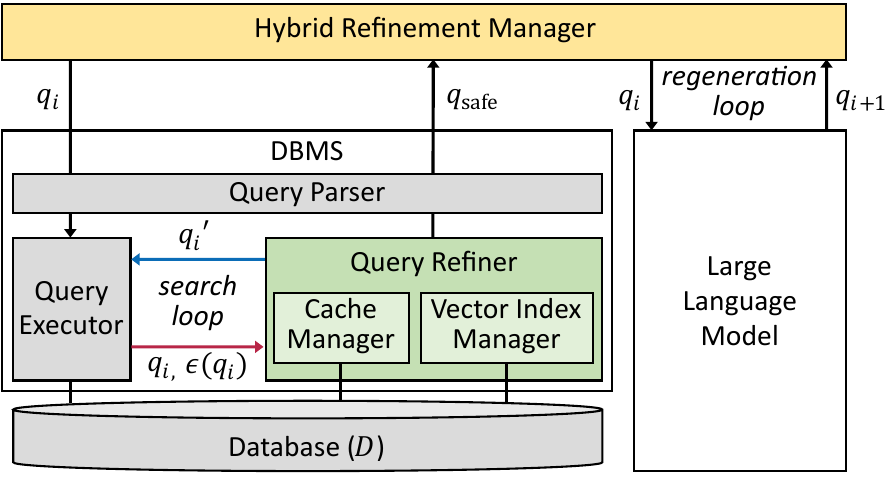}}
    \vspace{-3mm}
    \caption{
        Architecture of the SafeQL framework.
    }
    \label{fig:architecture}
\vspace{-4mm}
\end{figure}

Within the \textit{search loop}, the query refiner iteratively explores candidate refinements until a valid query \( q_{\text{safe}} \) is found. 
If unsuccessful until a threshold is reached, control moves to the \textit{regeneration loop}, where an LLM regenerates a new query \( q_{i+1} \) guided by the hybrid refinement manager (Section~\ref{sec:system:hybrid}).  
Throughout the process, all intermediate queries are parsed and validated under the extended SQL grammar (Section~\ref{sec:system:grammar}), ensuring that even complex transformations remain syntactically correct and executable.  
Together, these mechanisms elevate SafeQL from an algorithmic concept to a practical, database-native Text-to-SQL refinement framework.

\noindent
\textbf{Implementation: }
SafeQL is implemented on top of PostgreSQL’s RawStmt structure, the canonical AST representation of SQL queries.  
It is packaged as a PostgreSQL~\cite{postgresql} extension, using fastembed~\cite{fastembed} with the bge-base-en-v1.5 model~\cite{bge-base-en-v1.5} for embedding computation and pgvecto.rs~\cite{pgvecto.rs} for efficient vector indexing inside the DBMS.


%% file: sec_7.tex
\section{Experimental evaluation}
\label{sec:evaluation}

In this section, we evaluate SafeQL in terms of two key dimensions: (1) accuracy, measured by execution accuracy, and (2) efficiency, measured by elapsed time and the number of LLM tokens consumed.
Our goal is to demonstrate that SafeQL improves accuracy while reducing expensive LLM invocations across various settings.

\subsection{Experimental Setup}

\subsubsection{Benchmarks.} 
\begin{description}[labelindent=0em,labelsep=0.5em,leftmargin=*,itemsep=0em]
\vspace{-2.5pt}
\item[\textbf{Bird}]
\textcolor{notice}
{
\!\!\!~\cite{li2024can} is a widely used cross-domain Text-to-SQL benchmark and our primary evaluation dataset.
To align with our PostgreSQL-based implementation, we migrated the Bird to PostgreSQL. 
We evaluate both the full dev set (1,534 queries) and the mini dev set (500 queries), for main and detailed evaluation, respectively.
Both dev sets span 11 databases and cover diverse queries, including joins, grouping, and subqueries.
}

\item[\textbf{Spider}]
\!\!\!~\cite{yu-etal-2018-spider} is another major cross-domain benchmark, consisting of 10,181 text-to-SQL pairs across 200 relational databases spanning more than 130 domains. 
For our experiments, we migrated Spider to PostgreSQL to verify SafeQL’s generalizability across different benchmarks and database environments.
\end{description}

\subsubsection{Testbed systems.} 
We integrate SafeQL with two representative Text-to-SQL approaches: prompt-based and agent-based methods. Specifically, we adopt DAIL-SQL and OpenSearch-SQL as the state-of-the-art systems for these approaches, respectively.

\begin{description}[labelindent=0em,labelsep=0.5em,leftmargin=*,itemsep=0em,itemsep=0em]

\item[\textbf{DAIL-SQL}]
\!\!\!~\cite{gao_text--sql_2024} is a prompt-based system that optimizes question representation, example selection, and example organization for in-context learning.
As the state-of-the-art prompt-based method on the Spider benchmark, it serves as a strong baseline for evaluating SafeQL’s refinement in prompting-based settings.

\item[\textbf{OpenSearch-SQL}]
\!\!\!~\cite{xie_opensearch-sql_2025} is an agent-based system that decomposes the Text-to-SQL process into four modules—\textit{Preprocessing, Extraction, Generation, and Refinement}—linked through a consistency alignment mechanism.
It leverages dynamic few-shot learning with self-taught Query-CoT-SQL pairs and structured few-shot templates, achieving state-of-the-art performance on the Bird benchmark.
We integrate SafeQL into OpenSearch-SQL to evaluate its refinement behavior under multi-agent coordination.

\end{description}

\subsubsection{Comparison methods.} 
To isolate the effect of refinement, we compare SafeQL with the refinement methods of five prominent Text-to-SQL systems---DIN-SQL, MAC-SQL, RED-SQL, CHESS-SQL, and OpenSearch-SQL---all of which adopt regeneration-based paradigms.

\begin{itemize}[labelindent=0em,labelsep=0.5em,leftmargin=*,itemsep=0em,itemsep=0em]

\item \textbf{DIN-SQL}
\!~\cite{pourreza_din-sql_2023} refines SQL without execution, assuming potential errors.
Among its two prompt variants, GENERIC and GENTLE, we use GENTLE, which performed better in our experiments.

\item \textbf{MAC-SQL}
\!~\cite{wang2025mac} regenerates SQL using DBMS error messages.

\item \textbf{RED-SQL}
\!~\cite{ren_power_2025} refines SQL using constraint-based violation reports rather than raw DBMS errors,
providing data-aware feedback that helps the LLM correct semantic inconsistencies.

\item \textbf{CHESS-SQL} 
\!~\cite{talaei_chess_2024} uses few-shot refinement examples combined with an execution error message.

\item \textbf{OpenSearch-SQL}
\!~\cite{xie_opensearch-sql_2025} also uses few-shot examples, but provides more structured templates tailored to specific error messages.

\end{itemize}

\vspace{-3mm}
\setlength{\tabcolsep}{1.8pt}
\renewcommand{\arraystretch}{1.0}
\begin{table}[H]
\caption{Comparison of refinement methods.}
\vspace{-4mm}
\label{tab:comparison_methods}
\centering
\begin{tabular}{l|c|c|c}
\hline
\textbf{\makecell[l]{Refinement \\ Methods}} & 
\textbf{\makecell{Refinement \\ Paradigm}} & 
\textbf{\makecell{Use of \\ Error Msg}} & 
\textbf{\makecell{Use of \\ Examples}} \\
\hline
DIN-SQL & \multirow{5}{*}{Regeneration} & No & Zero-shot \\ \cline{3-4}
MAC-SQL &  & Yes & Zero-shot \\ \cline{3-4}
RED-SQL &  & Yes$^{\dagger}$ & Zero-shot \\ \cline{3-4}
CHESS-SQL &  & Yes & Few-shot \\ \cline{3-4}
OpenSearch-SQL &  & Yes & Few-shot \\
\hline
SafeQL (ours) & Search + Regeneration & Yes & Few-shot \\
\hline
\end{tabular}
\vspace{1mm}
\small{$^{\dagger}$~RED-SQL uses constraint violation reports as extended error feedback.}
\end{table}
\vspace{-6mm}

\subsubsection{Models.}
We use GPT-OSS-120B~\cite{agarwal2025gpt} as the primary model for all methods, as it provides strong performance among open-source LLMs while ensuring transparent and reproducible evaluation across methods.
We also evaluate SafeQL with two complementary model families: 
\begin{itemize}[labelindent=0em,labelsep=0.5em,leftmargin=*,itemsep=0em,itemsep=0em]
\item \textbf{commercial LLMs}—GPT-4o~~\cite{openai2024gpt4o} and GPT-3.5-turbo~\cite{openai2023gpt35}
\item \textbf{open source LLMs}—Llama-3~\cite{meta2024llama3} and Qwen-2.5~\cite{qwen2024}
\end{itemize}
This diversity allows us to assess SafeQL’s robustness and portability across model architectures and training paradigms.

\subsubsection{Hardware and Software Environment.}
All experiments were conducted on a single server equipped with two AMD EPYC 7302 CPUs (3.0 GHz, 16 cores each), 2096 GB RAM (3200 MHz), and two NVIDIA A100 GPUs. 
LLM inference was served using vLLM~\cite{kwon2023efficient}.

\vspace{-2mm}
\subsection{Main results}
\label{sec:evaluation:main results}

We evaluate SafeQL on both the Bird and Spider benchmarks, measuring its impact on execution accuracy, token efficiency, and refinement time.
Across all settings, SafeQL consistently achieves higher accuracy while consuming far fewer LLM tokens than prior regeneration-based refinement methods.

\subsubsection{Bird results}
\label{sec:evaluation:main results:bird}

Table~\ref{tab:safeql_delta} summarizes the results on the Bird full dev benchmark.
SafeQL achieves the best overall performance, improving execution accuracy to 63.3\% with DAIL-SQL in the prompt-based approach and 69.4\% with OpenSearch-SQL in the agent-based approach—gains of +5.8\% and +5.2\%, respectively—while reducing execution errors by up to 87.4\%, showing its robustness across both approaches.
A deeper look at the regeneration-based methods highlights their inherent trade-offs: 
\begin{itemize}[labelindent=0em,labelsep=0.5em,leftmargin=*,itemsep=0em,itemsep=0em]

\item \textbf{DIN-SQL} 
regenerates queries without execution feedback, and this often makes the queries worse, leading to accuracy loss.

\item \textbf{MAC-SQL}
incorporates raw error messages but provides limited semantic guidance, resulting in modest gains.

\item \textbf{RED-SQL}
improves accuracy with constraint-violation feedback, but at the cost of extremely high token usage and latency.

\item \textbf{CHESS-SQL} and \textbf{OpenSearch-SQL}
leverage few-shot examples for stronger accuracy; among them, OpenSearch-SQL is the most balanced, requiring fewer tokens through a short prompt design while maintaining high performance. However, both still rely on full regeneration after every failure.

\end{itemize}

In contrast, SafeQL(hybrid) employs search-based refinement that explores a structured space of error-directed transformations and invokes regeneration only when necessary.
This approach enables SafeQL to reach valid queries with 1.8–15.1$\times$ fewer tokens and 1.9–29.6$\times$ far shorter refinement latency, achieving both higher accuracy and greater efficiency than existing regeneration-based methods.
In addition, SafeQL(search), which relies solely on search-based refinement without issuing new SQL generations, resolves a large fraction of errors with no token overhead.
\textcolor{notice}{
As reflected in $\Delta$Err (\%), search-based refinement already reduces a substantial portion of errors, leaving only 13\%--17\% of erroneous queries to proceed to hybrid fallback, while incurring a moderate runtime overhead of approximately 1.3--1.8$\times$ on average.
}

\vspace{-1mm}
\subsubsection{Spider results}
\label{sec:evaluation:main results:spider}

Table~\ref{tab:safeql_delta_spider} shows the results on the Spider benchmark in the prompt-based setting (DAIL-SQL), where prompting systems already achieve strong performance and agent-based systems offer limited additional benefit.
SafeQL achieves 91.7\% execution accuracy, improving the baseline by +4.6\% and outperforming regeneration-based methods by 2.6--6.4\%.
SafeQL also demonstrates strong efficiency, requiring 5.1–96.5$\times$ fewer tokens and achieving 4.2–34.5$\times$ shorter refinement latency, while reducing execution errors by 77.4\%.
SafeQL(search) delivers most gains without additional tokens, indicating that search-based refinement remains effective and stable even in larger, diverse settings.

\vspace{-2mm}

\subsection{Accuracy analysis}
\label{sec:evaluation:accuracy}

\subsubsection{Model-wise results}

Table~\ref{tab:safeql_modelwise} compares SafeQL across different model families on the Bird mini dev benchmark.
The results show that SafeQL consistently improves execution accuracy regardless of model size or architecture.
Open-source models such as Qwen and Llama exhibit substantial gains of +5–12\%, narrowing the performance gap with larger proprietary models, while commercial LLMs (GPT-3.5-Turbo, GPT-4o) also improve by +4--9\% in both prompt and agent-based settings.

These results indicate that SafeQL’s search-based refinement is model-agnostic: it strengthens weaker models through explicit structural guidance while still providing measurable gains for strong systems.
This consistency across model families highlights its general applicability and robustness across diverse LLM architectures.

\vspace{-2mm}
{\setlength{\textfloatsep}{4pt}
\setlength{\floatsep}{4pt}
\setlength{\intextsep}{4pt}
\setlength{\abovecaptionskip}{2pt}
\setlength{\belowcaptionskip}{2pt}
\setlength{\tabcolsep}{1.0pt}
\setlength{\tabcolsep}{5pt}
\renewcommand{\arraystretch}{0.85}
\begin{table}[b]
  \caption{
  Model-wise comparison of SafeQL performance on the Bird mini dev benchmark, where $\Delta$EX denotes the accuracy gain over the no-refinement baseline.}
  \label{tab:safeql_modelwise}
  \centering
  \begin{tabular}{lcccc}
    \toprule
    \multirow{2}{*}{\textbf{Model}} &
    \multicolumn{2}{c}{\textbf{Prompt-based}} &
    \multicolumn{2}{c}{\textbf{Agent-based}} \\
    \cmidrule(lr){2-3} \cmidrule(lr){4-5}
    & EX (\%) & $\Delta$EX (\%) &
      EX (\%) & $\Delta$EX (\%) \\
    \midrule
    Qwen-2.5-7B     & $44.0$ & $+9.6$ & $47.2$ & $+9.0$ \\
    Qwen-2.5-72B      & $60.2$ & $+5.8$ & $63.4$ & $+5.2$ \\
    \midrule
    Llama-3-8B      & $37.4$ & $+11.6$ & $39.8$ & $+10.2$ \\
    Llama-3.3-70B     & $60.6$ & $+5.6$ & $63.6$ & $+5.4$ \\
    \midrule
    GPT-3.5-Turbo & $52.4$ & $+9.4$ & $53.0$ & $+8.8$ \\
    GPT-4o        & $61.8$ & $+4.8$ & $66.0$ & $+4.6$ \\
    \bottomrule
  \end{tabular}
\end{table}
}

{
\captionsetup{labelfont={bf,color=notice}}
\setlength{\tabcolsep}{2.5pt}
\renewcommand{\arraystretch}{1.0}
\begin{table*}[t]
  \caption{\textcolor{notice}{Comparison on the Bird full dev benchmark using GPT-OSS-120B, showing the absolute execution accuracy (EX), error reduction ($\Delta$Err), additional LLM costs ($\Delta$Tokens), and elapsed time ($\Delta$Time) for accuracy gains ($\Delta$EX) via refinements.}}
  \label{tab:safeql_delta}
  \centering
  \begin{tabular}{lccccccccccc}
    \toprule
    \multirow{2}{*}{\textbf{Refinements}} &
    \multicolumn{5}{c}{\textbf{Prompt-based (DAIL-SQL)}} &
    \multicolumn{5}{c}{\textbf{Agent-based (OpenSearch-SQL)}} \\
    \cmidrule(lr){2-6} \cmidrule(lr){7-11}
    & EX (\%) & $\Delta$EX (\%) & $\Delta$Err (\%) & $\Delta$Tokens ($K$) & $\Delta$Time (s)
    & EX (\%) & $\Delta$EX (\%) & $\Delta$Err (\%) & $\Delta$Tokens ($K$) & $\Delta$Time (s) \\
    \midrule
    No refinement   & $57.5$ & -- & -- & -- & -- & $64.2$ & -- & -- & -- & -- \\
    \midrule
    DIN-SQL~\cite{pourreza_din-sql_2023}         & $57.2$ & $-0.3$ & $4.2$ & $+7{,}748$ & $+5{,}079$
                    & $62.7$ & $-1.6$ & $11.0$ & $+118{,}615$ & $+62{,}104$ \\
    MAC-SQL~\cite{wang2025mac}         & $59.3$ & $+1.8$ & $23.0$ & $+3{,}865$ & $+3{,}505$
                    & $65.6$ & $+1.4$ & $19.7$ & $+66{,}358$ & $+42{,}004$ \\
    RED-SQL~\cite{ren_power_2025}         & $62.9$ & $+5.4$ & $44.2$ & $+19{,}594$ & $+20{,}110$
                    & $66.7$ & $+2.5$ & $46.5$ & $+411{,}469$ & $+474{,}390$ \\
    CHESS-SQL~\cite{talaei_chess_2024}       & $62.6$ & $+5.1$ & $69.1$ & $+6{,}653$ & $+3{,}295$
                    & $68.5$ & $+4.3$ & $80.3$ & $+79{,}934$ & $+35{,}573$ \\
    OpenSearch-SQL~\cite{xie_opensearch-sql_2025}  & $62.0$ & $+4.5$ & $63.0$ & $+3{,}799$ & $+3{,}141$
                    & $68.3$ & $+4.1$ & $69.3$ & $+53{,}143$ & $+30{,}880$ \\
    \midrule
    \textbf{SafeQL (search)} & $\textbf{62.5}$ & $\textbf{+5.0}$ & $\textbf{+70.9}$ & $\textbf{\underline{+0}}$ & $\textbf{\underline{+1{,}249}}$
                    & $\textbf{68.1}$ & $\textbf{+3.9}$ & $\textbf{+70.1}$ & $\textbf{\underline{+0}}$ & $\textbf{\underline{+8{,}656}}$ \\
    \textbf{SafeQL (hybrid)} & $\textbf{\underline{63.3}}$ & $\textbf{\underline{+5.8}}$ & $\textbf{\underline{+83.0}}$ & $\textbf{+1{,}305}$ & $\textbf{+1{,}653}$
                    & $\textbf{\underline{69.4}}$ & $\textbf{\underline{+5.2}}$ & $\textbf{\underline{+87.4}}$ & $\textbf{+28{,}779}$ & $\textbf{+16{,}029}$ \\
    \bottomrule
  \end{tabular}
\vspace{-4mm}
\end{table*}
}

\vspace{-3mm}
{\setlength{\textfloatsep}{4pt}
\setlength{\floatsep}{4pt}
\setlength{\intextsep}{4pt}
\setlength{\abovecaptionskip}{2pt}
\setlength{\belowcaptionskip}{4pt}
\setlength{\tabcolsep}{1.0pt}
\renewcommand{\arraystretch}{1.0}
\begin{table}[hbtp]
  \caption{
  Comparison on the Spider benchmark using GPT-OSS-120B under the DAIL-SQL prompt.}
  \label{tab:safeql_delta_spider}
  \centering
  \begin{tabular}{lccccc}
    \toprule
    \multirow{2}{*}{\textbf{Refinements}} &
    \multicolumn{5}{c}{\textbf{Prompt-based}} \\
    \cmidrule(lr){2-6}
    & EX (\%) & $\Delta$EX (\%) & $\Delta$Err (\%) & $\Delta$Tokens ($K$) & $\Delta$Time (s) \\
    \midrule
    No refinement   & $87.1$ & --  & -- & -- & -- \\
    \midrule
    DIN-SQL         & $85.3$ & $-1.8$ & $-3.2$ & $+1,055$ & $+1194$ \\
    MAC-SQL         & $88.4$ & $+1.3$ & $-22.6$ & $+396$ & $+753$ \\
    RED-SQL         & $88.6$ & $+1.5$ & $-29.0$ & $+7{,}528$ & $+6{,}120$ \\
    CHESS-SQL       & $89.1$ & $+2.0$ & $-45.1$ & $+1{,}279$ & $+1{,}305$ \\
    OpenSearch-SQL  & $89.1$ & $+2.0$ & $-48.7$ & $+465$ & $+1{,}050$ \\
    \midrule
    \textbf{SafeQL (search)} & $\textbf{91.3}$ & $\textbf{+4.2}$ & $\textbf{-68.5}$ & $\textbf{\underline{+0}}$ & $\textbf{\underline{+37}}$ \\
    \textbf{SafeQL (hybrid)} & $\textbf{\underline{91.7}}$ & $\textbf{\underline{+4.6}}$ & $\textbf{\underline{-77.4}}$ & $\textbf{+78}$ & $\textbf{+177}$ \\
    \bottomrule
  \end{tabular}
\vspace{-3mm}
\end{table}
}

\begin{figure}[t]
    \centerline{\includegraphics[width=3.3in]{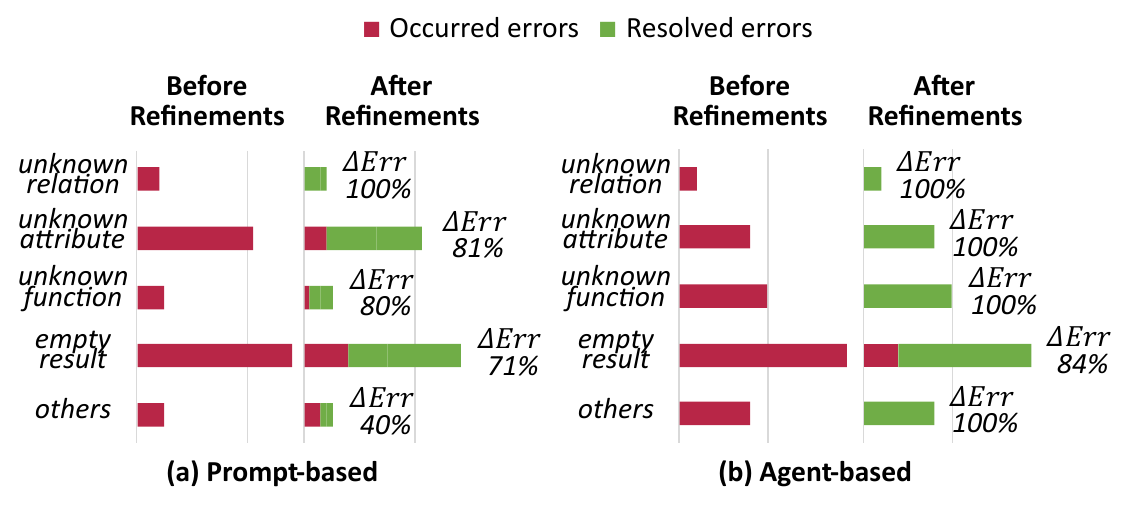}}
    \vspace{-4mm}
    \caption{Error statistics of SafeQL.}
    \label{fig:error-statistics}
\vspace{-4mm}
\end{figure}

\subsubsection{Error analysis}

Figure~\ref{fig:error-statistics} compares the SafeQL's error reduction across the major error categories.
These four representative forms of errors—\textit{unknown relation}, \textit{unknown attribute}, \textit{unknown function}, and \textit{empty result}—cover most of the execution failures observed in practice, and SafeQL consistently reduces all of them in both prompting- and agent-based settings.

In the prompt-based setting, as shown in Figure~\ref{fig:error-statistics}(a), the majority of errors arise from \textit{unknown attribute} or \textit{empty result}.
Because its extraction stage is less sophisticated than that of an agent-based setting, attribute mismatches and overly restrictive predicates appear more frequently.
SafeQL resolves 81\% of \textit{unknown attribute} errors and 71\% of \textit{empty result} cases, demonstrating that search-based refinement substantially improves the reliability of prompt-driven generation.
The remaining failures typically stem from severely misgenerated queries that cannot enter the safe query space, so no sequence of valid refinement steps can ever transform them into an executable query.

In contrast, the agent-based setting, as shown in Figure~\ref{fig:error-statistics}(b) starts from a stronger baseline, as the LLM handles both extraction and generation, leading to a more balanced error distribution.
Even so, SafeQL further eliminates nearly all remaining structural and functional errors—resolving 100\% of \textit{unknown relation}, \textit{attribute}, and \textit{function} errors—and reduces \textit{empty result} errors by 84\%.
These results demonstrate that SafeQL operates orthogonally to the underlying Text-to-SQL systems and that the defined safe query space is not only theoretically sound but also practically effective in capturing nearly all executable refinements in real workloads.

\vspace{-2mm}
\subsection{Efficiency analysis}
\label{sec:evaluation:efficiency}

Figure~\ref{fig:ablation-studies} presents efficiency ablation experiments on the Bird mini dev under the prompt-based setting, isolating the effects of SafeQL’s components:
(1) pruning methods and (2) systematic optimizations.

\begin{itemize}[labelindent=0em,labelsep=0.5em,leftmargin=*,itemsep=0em]

\item \textbf{Effect of pruning.}  
Figure~\ref{fig:ablation-studies}(a) shows that pruning substantially improves search efficiency without any loss of accuracy.
Without pruning, refinement requires 1403\,sec on Bird and 581\,sec on Spider.
Applying either type-based or top-$K$ pruning individually shortens this time, and combining both achieves the best performance—reducing the elapsed time to 647\,sec and 177\,sec, respectively.
These results show that pruning strategies effectively bound the search space while preserving refinement quality.

\item \textbf{Effect of systematic optimizations.}  
Figure~\ref{fig:ablation-studies}(b) evaluates the impact of systematic optimization on refinement latency.
The baseline--lacking caching and indexing--requires over 20{,}000\,sec on Bird and 9{,}000\,sec on Spider.
Introducing caching reduces this time to 1357\,sec and 433\,sec, while indexing alone yields only a modest improvement.
When combined, the total time further decreases to 647\,sec and 177\,sec—amounting to over 30$\times$ overall reduction—again with no degradation in execution accuracy.
Among these optimizations, caching contributes the most by eliminating redundant embedding computations, whereas indexing provides complementary acceleration for similarity lookups.


\end{itemize}

Figure~\ref{fig:parameters} shows how performance changes on the Bird mini dev benchmark, as we vary the two key parameters $\alpha$ and $K$.
We report results in a search-only setting to isolate the effect of these parameters.

\begin{itemize}[labelindent=0em,labelsep=0.5em,leftmargin=*,itemsep=0.2em]

\item \textbf{Effect of $\boldsymbol{\alpha}$.} 
The higher $\alpha$ increases the penalty for structural transformation during refinement, making the search space more compact and substantially reducing latency.  
However, this also introduces a trade-off: mid-range values ($\alpha \approx 0.3$–$0.4$) consistently provide the best execution accuracy, while larger $\alpha$ values prioritize speed over precision.

\item \textbf{Effect of $\boldsymbol{K}$.}  
Raising $K$ allows SafeQL to keep more candidates at each refinement step, effectively widening the search tree.  
This improves accuracy through broader exploration but also increases search time.  
In practice, accuracy saturates once $K \ge 3$, so setting $K$ to at least 3 provides near-optimal performance while keeping runtime manageable.

\end{itemize}



\begin{figure}[hbtp]
    \vspace{-4mm}
    \centerline{\includegraphics[width=3.3in]{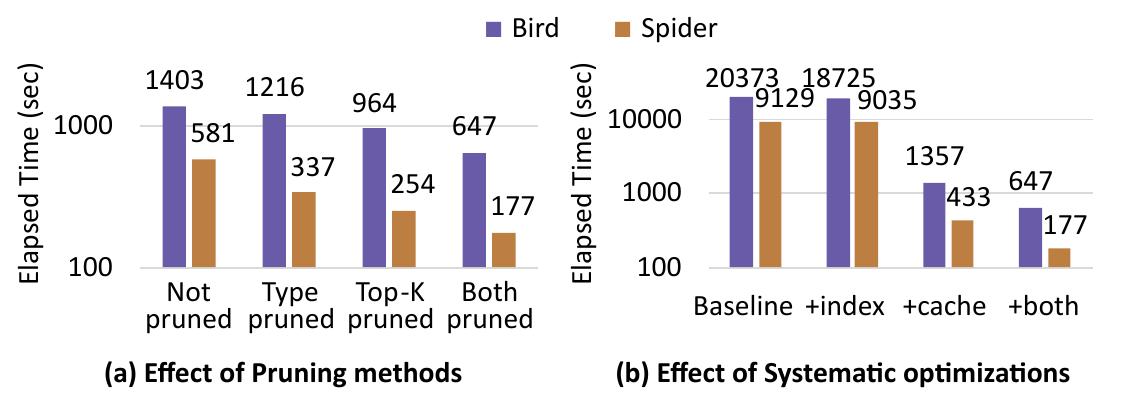}}
    \vspace{-4mm}
    \caption{Efficiency ablation studies.}
    \label{fig:ablation-studies}
    \vspace{-3mm}
\end{figure}

\begin{figure}[hbtp]
    \vspace{-4mm}
    \centerline{\includegraphics[width=3.3in]{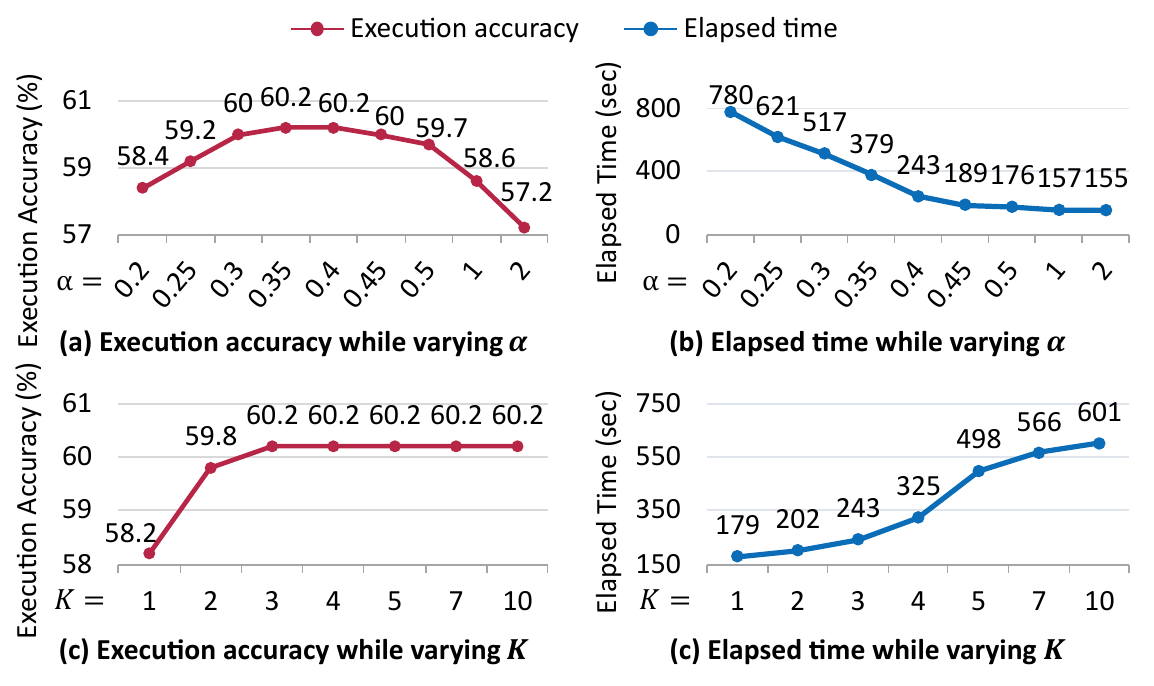}}
    \vspace{-4mm}
    \caption{Effects of the parameters $\alpha$ and $K$.}
    \label{fig:parameters}
    \vspace{-5mm}
\end{figure}

\subsection{System overhead}
\label{sec:evaluation:overhead}

\textcolor{notice}
{
Table~\ref{tab:overhead} reports the system overhead of SafeQL on the Bird dev database; Spider is omitted due to its small size (<1~GB), where both the embedding cache and index are negligible.
The reported primary storage includes all tuples and value embeddings, which are common in Text-to-SQL systems~\cite{liu2025survey} and are not introduced by SafeQL.
SafeQL adds two auxiliary structures—an embedding cache and a vector index—both substantially smaller than the primary data, resulting in minimal storage overhead.
Although index construction is relatively expensive with a single thread (2472 sec), it can be significantly reduced via parallelism (295 sec).
Since construction is performed once upfront and subsequent updates can be handled via incremental strategies~\cite{xu2023spfresh, Yu26Greater}, the overhead remains moderate in practice and can be further reduced with future optimizations.
}

{
\vspace{-2mm}
\captionsetup{labelfont={bf,color=notice}}
\setlength{\tabcolsep}{2.0pt}
\renewcommand{\arraystretch}{1.0}
\begin{table}[h]
  \caption{\textcolor{notice}{SafeQL overhead on the Bird database (T = \# threads).}}
  \vspace*{-2mm}
  \label{tab:overhead}
  \centering
  \begin{tabular}{cccc}
    \toprule
    \multirow{2}{*}{\makecell{\textbf{Primary} \\ \textbf{storage}}} 
    & \multicolumn{3}{c}{\textbf{SafeQL overhead}} \\
    \cmidrule(lr){2-4}
    & \textbf{Cache size} 
    & \textbf{Index size} 
    & \textbf{Index build time (1T / 16T)} \\
    \midrule
    8.04 GB & 686.6 MB & 375.2 MB & 2472 sec / 295 sec \\
    \bottomrule
  \end{tabular}
\vspace{-4mm}
\end{table}
}

%% file: sec_8.tex
\section{Related work}
\label{sec:related_work}
\vspace{-1mm}

\begin{itemize}[labelindent=0em,labelsep=0.5em,leftmargin=*]

\item
\textbf{LLM-based Text-to-SQL} can be broadly categorized into \textit{prompt-based} and \textit{agent-based} approaches.
Prompt-based approaches improve generation quality via prompt design.
DAIL-SQL~\cite{gao_text--sql_2024}, MCS-SQL~\cite{lee_mcs-sql_2025}, PET-SQL~\cite{li2024pet}, and CodexDB~\cite{trummer2022codexdb} enable in-context learning through \textit{few-shot prompting}.
Act-SQL~\cite{zhang2023act} and CoT-SQL~\cite{tai_exploring_2023} enhance reasoning via \textit{Chain-of-Thought}, while DTS-SQL~\cite{pourreza2024dts} and DIN-SQL~\cite{pourreza_din-sql_2023} uses a \textit{decomposition} to breakdown complex questions.
Agent-based approaches coordinate agents to generate queries and maintain consistency.
C3~\cite{dong2023c3}, CHESS~\cite{talaei_chess_2024}, and CSC-SQL~\cite{sheng_csc-sql_2025} select consistent outputs across candidates, OpenSearch-SQL~\cite{xie_opensearch-sql_2025} aligns intermediate states, and Alpha-SQL~\cite{li_alpha-sql_2025} applies Monte Carlo Tree Search over agent actions.

\item 
\textcolor{notice}{
\textbf{Search techniques in Text-to-SQL} are used in decoding during generation, where greedy search selects the most probable token, as in SQLNet~\cite{xu_sqlnet_2017} and Seq2SQL~\cite{zhong_seq2sql_2017}, and beam search explores multiple candidates, as in RAT-SQL~\cite{wang_rat-sql_2020}, EditSQL~\cite{zhang2019editing}, SmBoP~\cite{rubin_smbop_2021}, and ZeroNL2SQL~\cite{gu2023interleaving}.
At a higher level, Alpha-SQL~\cite{li_alpha-sql_2025} applies search via MCTS over agent actions to guide generation.
These approaches perform search during generation, exploring multiple candidates while constructing SQL.
However, while substantially enhancing generation quality and robustness, this also involves maintaining multiple candidates during decoding and generation, introducing additional computational overhead, which is further exacerbated in large models (e.g., LLMs), along with hallucination issues inherent to such models that may compromise query safety.
In contrast, SafeQL introduces search in the refinement stage within the DBMS, leveraging execution feedback to ensure safety while avoiding unnecessary regeneration, enabling faster and more token-efficient refinement.
}

\end{itemize}

\vspace{-4mm}

%% file: sec_9.tex
\section{Conclusion and Discussion}
\label{sec:conclusion}
\vspace{-1mm}

SafeQL introduces a novel search-based perspective on Text-to-SQL refinement, transforming database feedback into structured guidance for executable, intent-preserving queries.
By coupling semantic search with in-database optimization, it achieves both reliability and efficiency, improving execution accuracy by up to 5.8\% while reducing token usage by 15$\times$ over regeneration-based baselines.
Rather than replacing existing generation paradigms, SafeQL complements them through structured, error-guided search process tightly coupled with database semantics.

More broadly, this work relates to efforts to improve the safety of LLMs under formal correctness constraints, where hallucinations often produce invalid or non-executable results.
Recent work in Automatic Program Repair (APR) similarly treats generation as a constraint-guided process integrated with type systems~\cite{zhang2024systematic, mundler2025type, nagy2026chopchop}.
Likewise, SafeQL uses the DBMS analyzer as structured constraints to ground refinement in database semantics and ensure execution-safe query construction, suggesting broader integration of symbolic database reasoning with LLMs for more reliable database interfaces.

%% file: main.bib
@misc{sheng_csc-sql_2025,
  title = {{CSC}-{SQL}: Corrective Self-Consistency in Text-to-SQL via Reinforcement Learning},
  shorttitle = {{CSC}-{SQL}},
  url = {http://arxiv.org/abs/2505.13271},
  doi = {10.48550/arXiv.2505.13271},
  author = {Sheng, Lei and Xu, Shuai-Shuai},
  month = jun,
  year = {2025},
  publisher = {arXiv}
}

@inproceedings{pourreza_chase-sql_2024,
  title = {{CHASE}-{SQL}: Multi-Path Reasoning and Preference Optimized Candidate Selection in Text-to-SQL},
  shorttitle = {{CHASE}-{SQL}},
  url = {https://openreview.net/forum?id=CvGqMD5OtX},
  author = {Pourreza, Mohammadreza and Li, Hailong and Sun, Ruoxi and Chung, Yeounoh and Talaei, Shayan and Kakkar, Gaurav Tarlok and Gan, Yu and Saberi, Amin and Ozcan, Fatma and Arik, Sercan O.},
  month = oct,
  year = {2024}
}

@article{xie_opensearch-sql_2025,
  title = {{OpenSearch}-{SQL}: Enhancing Text-to-SQL with Dynamic Few-shot and Consistency Alignment},
  shorttitle = {{OpenSearch}-{SQL}},
  url = {https://dl.acm.org/doi/10.1145/3725331},
  doi = {10.1145/3725331},
  journal = {Proc. ACM Manag. Data},
  author = {Xie, Xiangjin and Xu, Guangwei and Zhao, Lingyan and Guo, Ruijie},
  volume = {3},
  number = {3},
  pages = {194:1--194:24},
  year = {2025}
}

@inproceedings{li_alpha-sql_2025,
  title = {Alpha-{SQL}: Zero-Shot Text-to-SQL using Monte Carlo Tree Search},
  shorttitle = {Alpha-{SQL}},
  url = {https://openreview.net/forum?id=kGg1ndttmI},
  author = {Li, Boyan and Zhang, Jiayi and Fan, Ju and Xu, Yanwei and Chen, Chong and Tang, Nan and Luo, Yuyu},
  month = jun,
  year = {2025}
}

@misc{talaei_chess_2024,
  title = {{CHESS}: Contextual Harnessing for Efficient SQL Synthesis},
  shorttitle = {{CHESS}},
  url = {https://arxiv.org/abs/2405.16755v3},
  author = {Talaei, Shayan and Pourreza, Mohammadreza and Chang, Yu-Chen and Mirhoseini, Azalia and Saberi, Amin},
  month = may,
  year = {2024}
}

@article{gao_text--sql_2024,
  title = {Text-to-SQL Empowered by Large Language Models: A Benchmark Evaluation},
  shorttitle = {Text-to-SQL Empowered by Large Language Models},
  url = {https://dl.acm.org/doi/10.14778/3641204.3641221},
  doi = {10.14778/3641204.3641221},
  journal = {Proc. VLDB Endow.},
  volume = {17},
  number = {5},
  pages = {1132--1145},
  author = {Gao, Dawei and Wang, Haibin and Li, Yaliang and Sun, Xiuyu and Qian, Yichen and Ding, Bolin and Zhou, Jingren},
  year = {2024}
}

@inproceedings{lee_mcs-sql_2025,
  title = {{MCS}-{SQL}: Leveraging Multiple Prompts and Multiple-Choice Selection For Text-to-SQL Generation},
  shorttitle = {{MCS}-{SQL}},
  url = {https://aclanthology.org/2025.coling-main.24/},
  booktitle = {Proceedings of the 31st International Conference on Computational Linguistics},
  author = {Lee, Dongjun and Park, Choongwon and Kim, Jaehyuk and Park, Heesoo},
  month = jan,
  year = {2025},
  pages = {337--353}
}

@inproceedings{nan_enhancing_2023,
  title = {Enhancing Text-to-SQL Capabilities of Large Language Models: A Study on Prompt Design Strategies},
  url = {https://aclanthology.org/2023.findings-emnlp.996/},
  doi = {10.18653/v1/2023.findings-emnlp.996},
  booktitle = {Findings of the Association for Computational Linguistics: EMNLP 2023},
  author = {Nan, Linyong and Zhao, Yilun and Zou, Weijin and Ri, Narutatsu and Tae, Jaesung and Zhang, Ellen and Cohan, Arman and Radev, Dragomir},
  month = feb,
  year = {2023},
  pages = {14935--14956}
}

@inproceedings{pourreza_din-sql_2023,
  title = {{DIN}-{SQL}: Decomposed In-Context Learning of Text-to-SQL with Self-Correction},
  shorttitle = {{DIN}-{SQL}},
  booktitle = {Proceedings of the 37th International Conference on Neural Information Processing Systems},
  author = {Pourreza, Mohammadreza and Rafiei, Davood},
  year = {2023},
  pages = {36339--36348}
}

@inproceedings{zhou_least--most_2022,
  title = {Least-to-Most Prompting Enables Complex Reasoning in Large Language Models},
  url = {https://openreview.net/forum?id=WZH7099tgfM},
  author = {Zhou, Denny and Schärli, Nathanael and Hou, Le and Wei, Jason and Scales, Nathan and Wang, Xuezhi and Schuurmans, Dale and Cui, Claire and Bousquet, Olivier and Le, Quoc V. and Chi, Ed H.},
  month = sep,
  year = {2022}
}

@inproceedings{tai_exploring_2023,
  title = {Exploring Chain of Thought Style Prompting for Text-to-SQL},
  url = {https://aclanthology.org/2023.emnlp-main.327/},
  doi = {10.18653/v1/2023.emnlp-main.327},
  author = {Tai, Chang-Yu and Chen, Ziru and Zhang, Tianshu and Deng, Xiang and Sun, Huan},
  month = dec,
  year = {2023},
  pages = {5376--5393}
}

@article{wei_chain--thought_2022,
  title = {Chain-of-Thought Prompting Elicits Reasoning in Large Language Models},
  volume = {35},
  journal = {Advances in Neural Information Processing Systems},
  author = {Wei, Jason and Wang, Xuezhi and Schuurmans, Dale and Bosma, Maarten and Ichter, Brian and Xia, Fei and Chi, Ed and Le, Quoc V. and Zhou, Denny},
  month = dec,
  year = {2022},
  pages = {24824--24837}
}

@article{popescu_towards_nodate,
  title = {Towards a Theory of Natural Language Interfaces to Databases},
  author = {Popescu, Ana-Maria and Etzioni, Oren and Kautz, Henry}
}

@inproceedings{li_nalir_2014,
  title = {{NaLIR}: An Interactive Natural Language Interface for Querying Relational Databases},
  shorttitle = {{NaLIR}},
  url = {https://dl.acm.org/doi/10.1145/2588555.2594519},
  doi = {10.1145/2588555.2594519},
  booktitle = {Proceedings of the 2014 ACM SIGMOD International Conference on Management of Data},
  author = {Li, Fei and Jagadish, Hosagrahar V.},
  month = jun,
  year = {2014},
  pages = {709--712}
}

@article{yaghmazadeh_sqlizer_2017,
  title = {{SQLizer}: Query Synthesis from Natural Language},
  shorttitle = {{SQLizer}},
  url = {https://dl.acm.org/doi/10.1145/3133887},
  doi = {10.1145/3133887},
  journal = {Proc. ACM Program. Lang.},
  author = {Yaghmazadeh, Navid and Wang, Yuepeng and Dillig, Isil and Dillig, Thomas},
  volume = {1},
  number = {OOPSLA},
  year = {2017},
  pages = {1--26}
}

@article{saha_athena_2016,
  title = {{ATHENA}: An Ontology-Driven System for Natural Language Querying over Relational Data Stores},
  shorttitle = {{ATHENA}},
  url = {https://dl.acm.org/doi/10.14778/2994509.2994536},
  doi = {10.14778/2994509.2994536},
  journal = {Proc. VLDB Endow.},
  volume = {9},
  number = {12},
  pages = {1209--1220},
  author = {Saha, Diptikalyan and Floratou, Avrilia and Sankaranarayanan, Karthik and Minhas, Umar Farooq and Mittal, Ashish R. and Özcan, Fatma},
  year = {2016}
}

@misc{zhong_seq2sql_2017,
  title = {{Seq2SQL}: Generating Structured Queries from Natural Language using Reinforcement Learning},
  shorttitle = {{Seq2SQL}},
  url = {http://arxiv.org/abs/1709.00103},
  doi = {10.48550/arXiv.1709.00103},
  author = {Zhong, Victor and Xiong, Caiming and Socher, Richard},
  month = nov,
  year = {2017}
}

@misc{xu_sqlnet_2017,
  title = {{SQLNet}: Generating Structured Queries From Natural Language Without Reinforcement Learning},
  shorttitle = {{SQLNet}},
  url = {http://arxiv.org/abs/1711.04436},
  doi = {10.48550/arXiv.1711.04436},
  author = {Xu, Xiaojun and Liu, Chang and Song, Dawn},
  month = nov,
  year = {2017}
}

@misc{guo_towards_2019,
  title = {Towards Complex Text-to-SQL in Cross-Domain Database with Intermediate Representation},
  url = {http://arxiv.org/abs/1905.08205},
  doi = {10.48550/arXiv.1905.08205},
  author = {Guo, Jiaqi and Zhan, Zecheng and Gao, Yan and Xiao, Yan and Lou, Jian-Guang and Liu, Ting and Zhang, Dongmei},
  month = may,
  year = {2019}
}

@inproceedings{wang_rat-sql_2020,
  title = {{RAT}-{SQL}: Relation-Aware Schema Encoding and Linking for Text-to-SQL Parsers},
  shorttitle = {{RAT}-{SQL}},
  url = {https://aclanthology.org/2020.acl-main.677/},
  doi = {10.18653/v1/2020.acl-main.677},
  booktitle = {Proceedings of the 58th Annual Meeting of the Association for Computational Linguistics},
  author = {Wang, Bailin and Shin, Richard and Liu, Xiaodong and Polozov, Oleksandr and Richardson, Matthew},
  month = jul,
  year = {2020},
  pages = {7567--7578}
}

@inproceedings{rubin_smbop_2021,
  title = {{SmBoP}: Semi-autoregressive Bottom-up Semantic Parsing},
  shorttitle = {{SmBoP}},
  url = {https://aclanthology.org/2021.naacl-main.29/},
  doi = {10.18653/v1/2021.naacl-main.29},
  author = {Rubin, Ohad and Berant, Jonathan},
  booktitle = {Proceedings of the 2021 Conference of the North American Chapter of the Association for Computational Linguistics: Human Language Technologies},
  month = jun,
  year = {2021},
  pages = {311--324}
}

@inproceedings{scholak_picard_2021,
  title = {{PICARD}: Parsing Incrementally for Constrained Auto-Regressive Decoding from Language Models},
  shorttitle = {{PICARD}},
  url = {https://aclanthology.org/2021.emnlp-main.779/},
  doi = {10.18653/v1/2021.emnlp-main.779},
  author = {Scholak, Torsten and Schucher, Nathan and Bahdanau, Dzmitry},
  month = jan,
  year = {2021},
  pages = {9895--9901}
}

@misc{lin_bridging_2020,
  title = {Bridging Textual and Tabular Data for Cross-Domain Text-to-SQL Semantic Parsing},
  url = {http://arxiv.org/abs/2012.12627},
  doi = {10.48550/arXiv.2012.12627},
  author = {Lin, Xi Victoria and Socher, Richard and Xiong, Caiming},
  month = dec,
  year = {2020}
}

@inproceedings{brown_language_2020,
  title = {Language Models are Few-Shot Learners},
  url = {https://proceedings.neurips.cc/paper_files/paper/2020/hash/1457c0d6bfcb4967418bfb8ac142f64a-Abstract.html},
  booktitle = {Advances in Neural Information Processing Systems},
  author = {Brown, Tom and Mann, Benjamin and Ryder, Nick and Subbiah, Melanie and Kaplan, Jared D. and Dhariwal, Prafulla and Neelakantan, Arvind and Shyam, Pranav and Sastry, Girish and Askell, Amanda and Agarwal, Sandhini and Herbert-Voss, Ariel and Krueger, Gretchen and Henighan, Tom and Child, Rewon and Ramesh, Aditya and Ziegler, Daniel and Wu, Jeffrey and Winter, Clemens and Hesse, Chris and Chen, Mark and Sigler, Eric and Litwin, Mateusz and Gray, Scott and Chess, Benjamin and Clark, Jack and Berner, Christopher and McCandlish, Sam and Radford, Alec and Sutskever, Ilya and Amodei, Dario},
  year = {2020},
  pages = {1877--1901}
}

@article{ren_power_2025,
  title = {The Power of Constraints in Natural Language to SQL Translation},
  journal = {Proc. VLDB Endow.},
  volume = {18},
  number = {7},
  pages = {2097--2111},
  doi = {10.14778/3734839.3734847},
  author = {Ren, Tonghui and Ke, Chen and Fan, Yuankai and Jing, Yinan and He, Zhenying and Zhang, Kai and Wang, X. Sean},
  month = mar,
  year = {2025}
}

@article{zhang_simple_1989,
  title = {Simple Fast Algorithms for the Editing Distance between Trees and Related Problems},
  journal = {SIAM Journal on Computing},
  volume = {18},
  number = {6},
  pages = {1245--1262},
  author = {Zhang, Kaizhong and Shasha, Dennis},
  month = dec,
  year = {1989}
}

@misc{reimers_sentence-bert_2019,
  title = {Sentence-BERT: Sentence Embeddings using Siamese BERT-Networks},
  url = {http://arxiv.org/abs/1908.10084},
  doi = {10.48550/arXiv.1908.10084},
  author = {Reimers, Nils and Gurevych, Iryna},
  month = aug,
  year = {2019}
}

@inproceedings{santhanam_colbertv2_2022,
  title = {{ColBERTv2}: Effective and Efficient Retrieval via Lightweight Late Interaction},
  shorttitle = {{ColBERTv2}},
  url = {https://aclanthology.org/2022.naacl-main.272/},
  doi = {10.18653/v1/2022.naacl-main.272},
  author = {Santhanam, Keshav and Khattab, Omar and Saad-Falcon, Jon and Potts, Christopher and Zaharia, Matei},
  booktitle = {Proceedings of the 2022 Conference of the North American Chapter of the Association for Computational Linguistics: Human Language Technologies},
  month = jul,
  year = {2022},
  pages = {3715--3734}
}

@inproceedings{formal_splade_2021,
  title = {{SPLADE}: Sparse Lexical and Expansion Model for First Stage Ranking},
  shorttitle = {{SPLADE}},
  url = {https://dl.acm.org/doi/10.1145/3404835.3463098},
  doi = {10.1145/3404835.3463098},
  booktitle = {Proceedings of the 44th International ACM SIGIR Conference on Research and Development in Information Retrieval},
  author = {Formal, Thibault and Piwowarski, Benjamin and Clinchant, Stéphane},
  month = jul,
  year = {2021},
  pages = {2288--2292}
}

@inproceedings{yu-etal-2018-spider,
  title = {{Spider}: A Large-Scale Human-Labeled Dataset for Complex and Cross-Domain Semantic Parsing and Text-to-{SQL} Task},
  author = {Yu, Tao and Zhang, Rui and Yang, Kai and Yasunaga, Michihiro and Wang, Dongxu and Li, Zifan and Ma, James and Li, Irene and Yao, Qingning and Roman, Shanelle and Zhang, Zilin and Radev, Dragomir},
  editor = {Riloff, Ellen and Chiang, David and Hockenmaier, Julia and Tsujii, Jun'ichi},
  booktitle = {Proceedings of the 2018 Conference on Empirical Methods in Natural Language Processing},
  month = {oct--nov},
  year = {2018},
  address = {Brussels, Belgium},
  publisher = {Association for Computational Linguistics},
  url = {https://aclanthology.org/D18-1425/},
  doi = {10.18653/v1/D18-1425},
  pages = {3911--3921}
}

@article{li2024can,
  title = {Can LLM Already Serve as a Database Interface? A Big Bench for Large-Scale Database Grounded Text-to-SQLs},
  author = {Li, Jinyang and Hui, Binyuan and Qu, Ge and Yang, Jiaxi and Li, Binhua and Li, Bowen and Wang, Bailin and Qin, Bowen and Geng, Ruiying and Huo, Nan and others},
  journal = {Advances in Neural Information Processing Systems},
  volume = {36},
  year = {2024}
}

@article{trummer2022codexdb,
  title={CodexDB: Synthesizing code for query processing from natural language instructions using GPT-3 Codex},
  author={Trummer, Immanuel},
  journal={Proceedings of the VLDB Endowment},
  volume={15},
  number={11},
  pages={2921--2928},
  year={2022},
  publisher={VLDB Endowment}
}

@article{li2024pet,
  title={Pet-sql: A prompt-enhanced two-stage Text-to-SQL framework with cross-consistency},
  author={Li, Zhishuai and Wang, Xiang and Zhao, Jingjing and Yang, Sun and Du, Guoqing and Hu, Xiaoru and Zhang, Bin and Ye, Yuxiao and Li, Ziyue and Zhao, Rui and others},
  journal={CoRR},
  year={2024}
}

@inproceedings{zhang2023act,
  title={ACT-SQL: In-Context Learning for Text-to-SQL with Automatically-Generated Chain-of-Thought},
  author={Zhang, Hanchong and Cao, Ruisheng and Chen, Lu and Xu, Hongshen and Yu, Kai},
  booktitle={The 2023 Conference on Empirical Methods in Natural Language Processing}
}

@inproceedings{pourreza2024dts,
  title={DTS-SQL: Decomposed Text-to-SQL with Small Large Language Models},
  author={Pourreza, Mohammadreza and Rafiei, Davood},
  booktitle={Findings of the Association for Computational Linguistics: EMNLP 2024},
  pages={8212--8220},
  year={2024}
}

@inproceedings{wang2025mac,
  title={Mac-sql: A multi-agent collaborative framework for text-to-sql},
  author={Wang, Bing and Ren, Changyu and Yang, Jian and Liang, Xinnian and Bai, Jiaqi and Chai, Linzheng and Yan, Zhao and Zhang, Qian-Wen and Yin, Di and Sun, Xing and others},
  booktitle={Proceedings of the 31st International Conference on Computational Linguistics},
  pages={540--557},
  year={2025}
}

@article{dong2023c3,
  title={C3: Zero-shot text-to-sql with chatgpt},
  author={Dong, Xuemei and Zhang, Chao and Ge, Yuhang and Mao, Yuren and Gao, Yunjun and Lin, Jinshu and Lou, Dongfang and others},
  journal={arXiv preprint arXiv:2307.07306},
  year={2023}
}

@inproceedings{zhang2019editing,
  title={Editing-based SQL query generation for cross-domain context-dependent questions},
  author={Zhang, Rui and Yu, Tao and Er, Heyang and Shim, Sungrok and Xue, Eric and Lin, Xi Victoria and Shi, Tianze and Xiong, Caiming and Socher, Richard and Radev, Dragomir},
  booktitle={Proceedings of the 2019 Conference on Empirical Methods in Natural Language Processing and the 9th International Joint Conference on Natural Language Processing (EMNLP-IJCNLP)},
  pages={5338--5349},
  year={2019}
}

@article{gu2023interleaving,
  title={Interleaving pre-trained language models and large language models for zero-shot nl2sql generation},
  author={Gu, Zihui and Fan, Ju and Tang, Nan and Zhang, Songyue and Zhang, Yuxin and Chen, Zui and Cao, Lei and Li, Guoliang and Madden, Sam and Du, Xiaoyong},
  journal={arXiv preprint arXiv:2306.08891},
  year={2023}
}

@inproceedings{xu2023spfresh,
  title={Spfresh: Incremental in-place update for billion-scale vector search},
  author={Xu, Yuming and Liang, Hengyu and Li, Jin and Xu, Shuotao and Chen, Qi and Zhang, Qianxi and Li, Cheng and Yang, Ziyue and Yang, Fan and Yang, Yuqing and others},
  booktitle={Proceedings of the 29th Symposium on Operating Systems Principles},
  pages={545--561},
  year={2023}
}

@article{Yu26Greater,
author = {Yu, Song and Lin, Shengyuan and Gong, Shufeng and Xie, Yongqing and Liu, Ruicheng and Zhou, Yijie and Sun, Ji and Zhang, Yanfeng and Li, Guoliang and Yu, Ge},
title = {A Topology-Aware Localized Update Strategy for Graph-Based ANN Index},
year = {2026},
issue_date = {November 2025},
publisher = {VLDB Endowment},
volume = {19},
number = {3},
issn = {2150-8097},
journal = {Proc. VLDB Endow.},
pages = {495–508},
numpages = {14}
}

@article{zhang2024systematic,
  title={A systematic literature review on large language models for automated program repair},
  author={Zhang, Quanjun and Fang, Chunrong and Xie, Yang and Ma, YuXiang and Sun, Weisong and Yang, Yun and Chen, Zhenyu},
  journal={ACM Transactions on Software Engineering and Methodology},
  year={2024},
  publisher={ACM New York, NY}
}

@article{mundler2025type,
  title={Type-constrained code generation with language models},
  author={M{\"u}ndler, Niels and He, Jingxuan and Wang, Hao and Sen, Koushik and Song, Dawn and Vechev, Martin},
  journal={Proceedings of the ACM on Programming Languages},
  volume={9},
  number={PLDI},
  pages={601--626},
  year={2025},
  publisher={ACM New York, NY, USA}
}

@article{nagy2026chopchop,
  title={ChopChop: A Programmable Framework for Semantically Constraining the Output of Language Models},
  author={Nagy, Shaan and Zhou, Timothy and Polikarpova, Nadia and D'Antoni, Loris},
  journal={Proceedings of the ACM on Programming Languages},
  volume={10},
  number={POPL},
  pages={1905--1932},
  year={2026},
  publisher={ACM New York, NY, USA}
}

@inproceedings{zelle1996learning,
  title={Learning to parse database queries using inductive logic programming},
  author={Zelle, John M and Mooney, Raymond J},
  booktitle={Proceedings of the national conference on artificial intelligence},
  pages={1050--1055},
  year={1996}
}

@inproceedings{iacob2020neural,
  title={Neural approaches for natural language interfaces to databases: A survey},
  author={Iacob, Radu Cristian Alexandru and Brad, Florin and Apostol, Elena-Simona and Truic{\u{a}}, Ciprian-Octavian and Hosu, Ionel Alexandru and Rebedea, Traian},
  booktitle={proceedings of the 28th International Conference on Computational Linguistics},
  pages={381--395},
  year={2020}
}

@article{shen2025study,
  title={A Study of In-Context-Learning-Based Text-to-SQL Errors},
  author={Shen, Jiawei and Wan, Chengcheng and Qiao, Ruoyi and Zou, Jiazhen and Xu, Hang and Shao, Yuchen and Zhang, Yueling and Miao, Weikai and Pu, Geguang},
  journal={arXiv preprint arXiv:2501.09310},
  year={2025}
}

@article{narendra1977branch,
  title={A branch and bound algorithm for feature subset selection},
  author={Narendra and Fukunaga},
  journal={IEEE Transactions on computers},
  volume={100},
  number={9},
  pages={917--922},
  year={1977},
  publisher={IEEE}
}

@article{ning2024insights,
  title={Insights into natural language database query errors: From attention misalignment to user handling strategies},
  author={Ning, Zheng and Tian, Yuan and Zhang, Zheng and Zhang, Tianyi and Li, Toby Jia-Jun},
  journal={ACM Transactions on Interactive Intelligent Systems},
  volume={14},
  number={4},
  pages={1--32},
  year={2024},
  publisher={ACM New York, NY}
}

@inproceedings{chen2023text,
  title={Text-to-SQL Error Correction with Language Models of Code},
  author={Chen, Ziru and Chen, Shijie and White, Michael and Mooney, Raymond and Payani, Ali and Srinivasa, Jayanth and Su, Yu and Sun, Huan},
  booktitle={Proceedings of the 61st Annual Meeting of the Association for Computational Linguistics (Volume 2: Short Papers)},
  pages={1359--1372},
  year={2023}
}

@article{kim2020natural,
  title={Natural language to SQL: Where are we today?},
  author={Kim, Hyeonji and So, Byeong-Hoon and Han, Wook-Shin and Lee, Hongrae},
  journal={Proceedings of the VLDB Endowment},
  volume={13},
  number={10},
  pages={1737--1750},
  year={2020},
  publisher={VLDB Endowment}
}

@article{katsogiannis2023survey,
  title={A survey on deep learning approaches for text-to-SQL},
  author={Katsogiannis-Meimarakis, George and Koutrika, Georgia},
  journal={The VLDB Journal},
  volume={32},
  number={4},
  pages={905--936},
  year={2023},
  publisher={Springer}
}

@article{liu2025survey,
  title={A survey of text-to-sql in the era of llms: Where are we, and where are we going?},
  author={Liu, Xinyu and Shen, Shuyu and Li, Boyan and Ma, Peixian and Jiang, Runzhi and Zhang, Yuxin and Fan, Ju and Li, Guoliang and Tang, Nan and Luo, Yuyu},
  journal={IEEE Transactions on Knowledge and Data Engineering},
  year={2025},
  publisher={IEEE}
}

@article{hong2003introduction,
  title={Introduction to Programming Languages},
  author={Hong, Jaemin and Ryu, Sukyoung},
  year={2003}
}

@article{agarwal2025gpt,
  title={gpt-oss-120b \& gpt-oss-20b model card},
  author={Agarwal, Sandhini and Ahmad, Lama and Ai, Jason and Altman, Sam and Applebaum, Andy and Arbus, Edwin and Arora, Rahul K and Bai, Yu and Baker, Bowen and Bao, Haiming and others},
  journal={arXiv preprint arXiv:2508.10925},
  year={2025}
}

@misc{openai2024gpt4o,
  title     = {GPT-4o System Card},
  author    = {OpenAI},
  year      = {2024},
  url       = {https://openai.com},
}

@misc{openai2023gpt35,
  title     = {GPT-3.5 Technical Report},
  author    = {OpenAI},
  year      = {2023},
  url       = {https://openai.com},
}

@misc{meta2024llama3,
  title     = {The Llama 3 Herd of Models},
  author    = {Meta AI},
  year      = {2024},
  url       = {https://ai.meta.com/llama/},
}

@misc{qwen2024,
  title     = {Qwen2 Technical Report},
  author    = {Qwen Team},
  year      = {2024},
  url       = {https://qwen.ai},
}

@inproceedings{kwon2023efficient,
  title={Efficient memory management for large language model serving with pagedattention},
  author={Kwon, Woosuk and Li, Zhuohan and Zhuang, Siyuan and Sheng, Ying and Zheng, Lianmin and Yu, Cody Hao and Gonzalez, Joseph and Zhang, Hao and Stoica, Ion},
  booktitle={Proceedings of the 29th symposium on operating systems principles},
  pages={611--626},
  year={2023}
}

@manual{postgresql,
  title        = {PostgreSQL: The World's Most Advanced Open Source Relational Database},
  author       = {{The PostgreSQL Global Development Group}},
  year         = {2024},
  url          = {https://www.postgresql.org/},
  note         = {Version 17}
}

@software{fastembed,
  title        = {FastEmbed: Lightweight and Efficient Text Embedding Library},
  author       = {Qdrant Team},
  year         = {2024},
  url          = {https://github.com/qdrant/fastembed},
  note         = {Python library for fast text embedding inference}
}

@misc{bge-base-en-v1.5,
  title        = {BGE-base-en-v1.5: English Text Embedding Model},
  author       = {Xiao, Jun and Wang, Chenglin and Huang, Yuxin and others},
  year         = {2024},
  howpublished = {\url{https://huggingface.co/BAAI/bge-base-en-v1.5}},
  note         = {Beijing Academy of Artificial Intelligence (BAAI)}
}

@software{pgvecto.rs,
  title        = {pgvecto.rs: High-performance Vector Search Extension for PostgreSQL},
  author       = {TensorChord Team},
  year         = {2024},
  url          = {https://github.com/tensorchord/pgvecto.rs},
  note         = {Rust-based vector indexing extension for PostgreSQL}
}
